\newtheorem{Def}{Definition}
\newtheorem{Prop}{Proposition}
\newtheorem{Cor}{Corollary}
\newcommand{\eqdef}{\overset{def}{=}}
\begin{document}

\title{Dynamic Channel Access for Energy Efficient Data Gathering in Cognitive Radio Sensor Networks}

\author{Ju~Ren,~\IEEEmembership{Student Member,~IEEE,}
    Yaoxue~Zhang,
    Ning~Zhang,~\IEEEmembership{Student Member,~IEEE,}
    Deyu~Zhang,
    and~Xuemin (Sherman) Shen~\IEEEmembership{Fellow,~IEEE}
\IEEEcompsocitemizethanks{
\IEEEcompsocthanksitem Ju Ren, Yaoxue Zhang and Deyu Zhang are with the College of Information Science and Engineering, Central South University, Changsha, Hunan Province, China, 410083. Ju Ren and Deyu Zhang are also visiting scholars at the University of Waterloo now. E-mail: \{ren\_ju, zyx, zdy876\}@csu.edu.cn.
\IEEEcompsocthanksitem  Ning Zhang and Xuemin (Sherman) Shen are with the Department of Electrical and Computer Engineering, University of Waterloo, Waterloo, ON, Canada, N2L 3G1. E-mail: \{n35zhang, xshen\}@bbcr.uwaterloo.ca.}}

\maketitle

\begin{abstract}
Wireless sensor networks (WSNs) operating in the license-free spectrum suffer from uncontrolled interference as those spectrum bands become increasingly crowded. The emerging cognitive radio sensor networks (CRSNs) provide a promising solution to address this challenge by enabling sensor nodes to opportunistically access licensed channels. However, since sensor nodes have to consume considerable energy to support CR functionalities, such as channel sensing and switching, the opportunistic channel accessing should be carefully devised for improving the energy efficiency in CRSN. To this end, we investigate the dynamic channel accessing problem to improve the energy efficiency for a clustered CRSN. Under the primary users' protection requirement, we study the resource allocation issues to maximize the energy efficiency of utilizing a licensed channel for intra-cluster and inter-cluster data transmission, respectively. With the consideration of the energy consumption in channel sensing and switching, we further determine the condition when sensor nodes should sense and switch to a licensed channel for improving the energy efficiency, according to the packet loss rate of the license-free channel. In addition, two dynamic channel accessing schemes are proposed to identify the channel sensing and switching sequences for intra-cluster and inter-cluster data transmission, respectively. Extensive simulation results demonstrate that the proposed channel accessing schemes can significantly reduce the energy consumption in CRSNs.
\end{abstract}

\begin{IEEEkeywords}
cognitive radio sensor network, dynamic channel access, clustering, energy efficiency.
\end{IEEEkeywords}

\IEEEpeerreviewmaketitle

\section{Introduction}

\IEEEPARstart{W}{ireless} sensor network (WSN), as a promising event monitoring and data gathering technique, has been widely applied to various fields including environment monitoring, military surveillance and other industrial applications~\cite{deng2012energy}. A typical WSN consists of a large number of battery-powered sensor nodes to sense a specific area and periodically send the sensing results to the sink. Since sensor nodes are energy-constrained and generally deployed in unattended environment, energy efficiency becomes a critical issue in WSNs. Meanwhile, as the rapid growth of wireless services make the license-free spectrum increasingly crowded, WSNs operating over the license-free spectrum suffer from heavy interference caused by other networks sharing the same spectrum. The uncontrollable interference may cause a high packet loss rate and lead to excessive energy consumption for data retransmission, which significantly deteriorates the energy efficiency of the network.

Cognitive Radio (CR) technique has emerged as a promising solution to improve the spectrum utilization by enabling opportunistic access to the licensed spectrum bands~\cite{ning2014jsac}. This technology can also be applied to WSNs, which leads to Cognitive Radio Sensor Networks (CRSNs)~\cite{crsn2009}. Sensor nodes in CRSNs can sense the availability of licensed channels and adjust the operation parameters to access the idle ones, when the condition of the licensed-free channel degrades. However, since the energy consumed in supporting the CR functionalities, e.g., channel sensing and switching, is considerable for battery-powered sensor nodes~\cite{sensingcost_2010tvt,switchingcost_2013tvt}, the opportunistic channel access should be carefully studied to improve the energy efficiency in CRSNs. 

Existing works provide a comprehensive and in-depth investigation on optimizing the quality-of-service (QoS) performances for CRSNs, such as reducing the transmission delay~\cite{liang2011twc,adhoc2012delay,twc2014innetcomputations} or increasing the network capacity~\cite{routing2014tii,routing2014sensorjournal}. However, few of them have paid attention to improving the energy efficiency for CRSNs, with a delicate consideration of the energy consumption in channel sensing and switching. In order to enhance energy efficiency, the key issue is to determine when the energy consumption of transmitting a fixed amount of data can be reduced by sensing and accessing a licensed channel, compared with the energy consumption when only using the default license-free channel. It is very challenging since the decision depends on different factors, including the packet loss rate of the working channel, the probabilities for accessing licensed channels, as well as the protection for primary users (PUs). Moreover, due to the dynamic availability of licensed channels, when sensor nodes decide to sense and access a licensed channel, another challenge lies in identifying the best licensed channel to sense and access to optimize the energy efficiency for data transmission.

In this paper, we investigate the opportunistic channel accessing problem to improve energy efficiency in clustered CRSN. Sensor nodes form a number of clusters and periodically transmit their sensed data to the sink via hierarchical routing. They use license-free channel and are also able to access idle licensed channels when the packet loss rate increases. To protect the PUs sufficiently, the channel available duration (CAD) is limited for each licensed channel when it is detected as idle. Then, we analyze the expected energy consumption to determine if sensor nodes can reduce their energy consumption by accessing a licensed channel, considering the energy consumption in channel sensing and switching. Furthermore, to tackle the opportunistic availability of licensed channels, dynamic channel accessing with the resource allocation of an accessed channel is exploited for minimizing the energy consumption in both of the intra-cluster and inter-cluster data transmission. Specifically, the contributions of this work are three-fold.
\begin{enumerate}[(i)]
\item For both intra-cluster and inter-cluster data transmission, we determine the condition when sensor nodes should sense and switch to a licensed channel for potential energy consumption reduction.
\item We propose a dynamic channel accessing scheme to reduce the energy consumption for intra-cluster data transmission, which identifies the sensing and accessing sequence of the licensed channels within each cluster. 
\item Based on the analysis of intra-cluster data transmission, a joint power allocation and channel accessing scheme is developed for inter-cluster data transmission, which can dynamically adjust the transmission power of cluster heads and determine the channel sensing and accessing sequence to reduce energy consumption.
\end{enumerate}

The remainder of this paper is organized as follows. Section~\ref{sec2} overviews related works. The system model and problem statement are introduced in Section~\ref{sec3}. In Section~\ref{sec4}, we provide a detailed analysis of energy consumption for channel sensing decision and propose a dynamic channel sensing and accessing scheme for intra-cluster data transmission. Section~\ref{sec5} presents a joint power allocation and channel accessing scheme for inter-cluster data transmission. Simulation results are provided in Section~\ref{sec6} to evaluate the performance of the proposed schemes. Finally, Section~\ref{sec7} concludes the paper and outlines the future work.

\section{Related Works}
\label{sec2}
With ever-increasing wireless services and QoS requirements, traditional WSNs operating over the license-free spectrum, are facing unprecedented challenges to guarantee network performance. As an emerging solution for the spectrum scarcity of WSNs, CRSN has been well studied to improve the network performances, in terms of delay and throughput. 

Liang et al.~\cite{liang2011twc} analyze the delay performance to support real-time traffic in CRSNs. They derive the average packet transmission delay for two types of channel switching mechanisms, namely periodic switching and triggered switching, under two kinds of real-time traffic, including periodic data traffic and Poisson traffic, respectively. Bicen et al.~\cite{adhoc2012delay} provide several principles for delay-sensitive multimedia communication in CRSNs through extensive simulations. A greedy networking algorithm is proposed in~\cite{twc2014innetcomputations} to enhance the end-to-end delay and network throughput for CRSNs, by leveraging distributed source coding and broadcasting. Since the QoS performances of sensor networks can be significantly impacted by routing schemes, research efforts are also devoted in developing dynamic routing for CRSNs~\cite{routing2014tii,routing2014sensorjournal}. Quang and Kim~\cite{routing2014tii} propose a throughput-aware routing algorithm to improve network throughput and decrease end-to-end delay for a large-scale clustered CRSN based on ISA100.11a. In addition, opportunistic medium access (MAC) protocol design and performance analysis of existing MAC protocols for CRSNs are studied in~\cite{csma2014adhoc,mac2015tvt}.

Most of the existing works can effectively improve the network performances for various WSNs applications, and also provide a foundation for spectrum management and resource allocation in CRSNs. However, as a senor network composed of resource-limited and energy-constrained sensor nodes, CRSN is still facing an inherent challenge on energy efficiency, which attracts increasing attention to study the energy efficiency enhancement.

Han et al.~\cite{tvt2011ee} develop a channel management scheme for CRSNs, which can adaptively select the operation mode of the network in terms of channel sensing, channel switching, and data transmission/reception, for energy efficiency improvement according to the outcome of channel sensing. The optimal packet size is studied in~\cite{twc2012packetsize} to maximize energy efficiency while maintaining acceptable interference level for PUs and achieving reliable event detection in CRSNs. The transmission power of sensor nodes can also be adjusted for improving the energy efficiency of data transmission. In~\cite{powerallocation2012icc}, Chai et al. propose a power allocation algorithm for sensor nodes to achieve satisfactory performance in terms of energy efficiency, convergence speed and fairness in CRSNs. Meanwhile, since spectrum sensing accounts for a certain portion of energy consumption for CRSNs, energy efficient spectrum sensing schemes are also studied in CRSNs to improve the spectrum detection performance~\cite{sensorjournal2013sensingee, sensorjournal2011sensingee}. Furthermore, motivated by the superior energy efficiency of clustered WSNs, spectrum-aware clustering strategies are investigated in~\cite{tvt2014clustering, infocom2013clustering} to enhance energy efficiency and spectrum utilization for CRSNs. 

However, a comprehensive study on energy efficient data gathering is very important for CRSNs, which should jointly consider the energy consumption in channel sensing and switching, channel detection probability and PU protection to determine channel sensing and switching decision. 

\section{System Model}
\label{sec3}

\subsection{Network Model}
Consider a cognitive radio sensor network, where a set of cognitive sensor nodes $\mathcal{N} = \{s_1, ..., s_n\}$ are distributed to monitor the area of interest, as shown in Fig.~\ref{fig.network}. According to the application requirements, sensor nodes sense the environment with different sampling rates and periodically report their sensed data to the sink node~\cite{ju2014jcsu}. We divide the operation process of the network into a large number of data transmission periods and let $T$ denote the duration of a transmission period. Motivated by the benefits of hierarchical data gathering, sensor nodes form a number of clusters, denoted by $\mathcal{L} = \{L_1, ..., L_m\}$, to transmit the sensed data to the sink~\cite{ju2014ksii}. Denote the cluster head (CH) of $L_i$ as $H_i$, and the set of cluster members (CMs) in $L_i$ as $\mathcal{N}_i$.
\begin{figure}
\centering
\includegraphics[width=0.45\textwidth]{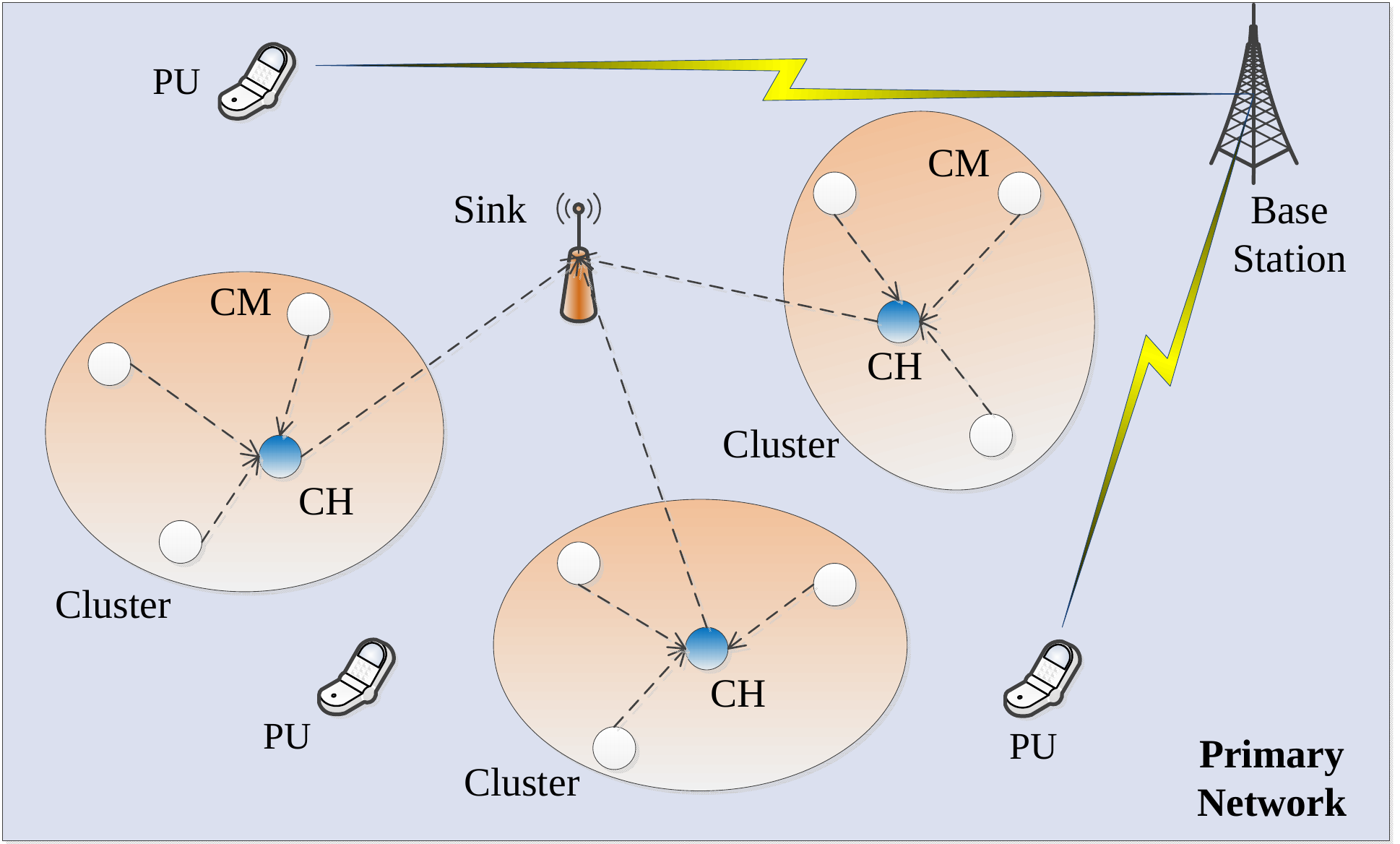}
\caption{The architecture of CRSN.}
\label{fig.network}
\end{figure}

The data transmission during each transmission period consists of two phases: intra-cluster data transmission and inter-cluster data transmission. In the intra-cluster data transmission, CMs directly transmit their sensed data to the cluster heads in a Time Division Multiple Access (TDMA) manner. During the inter-cluster data transmission, CHs aggregate the sensed data and directly send the aggregated intra-cluster data to the sink. The inter-cluster data transmission is also based on a TDMA manner, coordinated by the sink. The sensor network operates on a license-free channel $C_0$ for data transmission, which usually suffers from uncontrolled interference causing a significant packet loss rate. Enabled by the cognitive radio technique, sensor nodes can sense the licensed channels and access the vacant ones, when the packet loss rate of $C_0$ is fairly high. Sensor nodes have only one radio for data communication, which means sensor nodes can only access one channel at a time. 

\subsection{Cognitive Radio Model}
Suppose that there are $k$ different licensed data channels $\mathcal{C} = \{C_1, ..., C_k\}$ with different bandwidths $\{B_1, ..., B_k\}$ in the primary network. The PU's behavior is assumed to be stationary and ergodic over the $k$ channels. The cognitive sensor nodes in the primary network are secondary users (SUs) that can opportunistically access the idle channels. A fixed common control channel is considered to be available to exchange the control information among the sensor nodes and the sink. We model the PU traffic as a stationary exponential ON/OFF random process~\cite{ning2014jsac}. The ON state indicates that channel is occupied by PUs and the OFF state implies that the channel is idle. Let $V_x$ and $L_x$ be the exponential random variables, describing the idle and occupancy durations of $C_x$ with means $v_x$ and $l_x$, respectively. Thus, for each channel $C_x$, the probability of channel being idle $p_{off}^{x}$ and the probability of channel occupancy $p_{on}^{x}$ are
\begin{align}
\label{energymodel}
\begin{cases}
p_{off}^{x} =  v_x / (v_x + l_x), & \mathcal{H}_{0,x}\\ 
p_{on}^{x} =  l_x / (v_x + l_x), & \mathcal{H}_{1,x}
\end{cases},
\end{align}
where $\mathcal{H}_{0,x}$ and $\mathcal{H}_{1,x}$ represent the hypothesis that $C_x$ is idle and occupied, respectively. Sensor nodes are assumed to sense channel by the energy detection-based spectrum sensing approach~\cite{zhang2013cooperative}. When $s_j$ adopts energy detector to sense $C_x$, the detection probability $p_{d,x,j}$ (i.e., the probability of an occupied channel being determined to be occupied correctly) and the false alarm probability $p_{f,x,j}$  (i.e., the probability of an idle channel being determined as occupied) are defined as
 \begin{align*}
 \begin{split}
    p_{d,x,j} = Pr(D_x \ge \delta_x | \mathcal{H}_{1,x}),~p_{f,x,j} = Pr(D_x \ge \delta_x | \mathcal{H}_{0,x})
 \end{split}
 \end{align*}
where $\delta_x$ is the detection threshold and $D_x$ is the test statistic for $C_x$. 
And the misdetection probability can be calculated as $p_{m,x,j} = Pr(D_x < \delta_x | \mathcal{H}_{1,x}) = 1-p_{d,x,j}$.

According to \cite{liang2008twc}, the false alarm probability of $s_j$ for $C_x$ can be given by $p_{f,x,j}=\textit{Q}\left ( \left (\frac{\delta_x}{\sigma_x^2}-1 \right )\sqrt{\varphi f_s} \right )$,
where $\textit{Q}(\cdot)$ is the complementary distribution function of the standard Gaussian. The detection probability of $s_j$ for $C_x$ is given by $p_{d,x,j} =\textit{Q}\left ( \left (\frac{\delta_x}{\sigma_x^2}-\overline{\gamma}_{x,j}-1 \right )\sqrt{\frac{\varphi f_s}{2 \overline{\gamma}_{x,j}+1}}\right )$,
where $\overline{\gamma}_{x,j}$ is the average received signal-to-noise ratio (SNR) over channel $C_x$ at $s_j$. 

To enhance the accuracy of sensing results, sensor nodes collaboratively perform channel sensing. Specifically, sensor nodes in the same cluster send the individual sensing results to the cluster head to make a combined decision. The decision rules at the cluster head can include AND rule, OR rule, etc. When OR rule is adopted, PUs are considered to be present if at least one sensor claims the presence of PUs. Then, if we use a number of sensor nodes, e.g., a set of sensor nodes $\bm{y}$, to cooperatively sense a channel, the cooperative detection probability $F^x_d$ and the cooperative false alarm probability $F^x_f$ for channel $C_x$ are
\begin{align}
\begin{cases}
    F^x_d &= 1-\prod_{N_j \in \bm{y}}(1-p_{d,x,j})\\
    F^x_f &= 1-\prod_{N_j \in \bm{y}}(1-p_{f,x,j})
\end{cases}
\label{eq:css3}
\end{align}

The cooperative misdetection probability $F^x_m$ is defined as the probability that the presence of the PU is not detected, i.e., $F^x_m=1-F^x_d$. In order to guarantee the accuracy of spectrum sensing, channel sensing should satisfy a requirement that the probability of interfering with PUs should be below a predefined threshold $F_I$. In other words, there is a constraint on $\bm{y}$ such that
 \begin{align}
    p^x_{on} \cdot F^x_m = p^x_{on} \cdot \prod_{N_j \in \bm{y}}(1-p_{d,x,j}) \leq F_I.
    \label{eq.yconstraint}
 \end{align}

Given the signal transmission power $P_j$ of $s_j$, the noise power $\sigma_x^2$ over $C_x$, and the average channel gain $h_{j,i,x}^2$ of the link between $j$ and its destination node $i$ over $C_x$, the transmission rate $R_{j,i,x}$ from $j$ to $i$ can be given as~\cite{jsac2008spectrumsensing}:
\begin{align}
   R_{j,i,x} = B_x \log {\left ( 1 + h_{j,i,x}^2 \dfrac{P_j}{\sigma_x^2}\right )}.
      \label{eq.transrate}
\end{align}
We consider that data transmission over each licensed channel $C_x$ is error-free with the available channel capacity in Eq.~(\ref{eq.transrate}).  

During the intra-cluster data transmission, the transmission power of each sensor node is fixed to avoid co-channel interference among neighboring clusters~\cite{liang2011twc}. The inter-cluster data transmission is also performed in TDMA, but CHs can adjust their transmission power for inter-cluster transmission when accessing a licensed channel. However, we assume that CHs do not adjust their power when they transmit data over $C_0$, to avoid potential interference to other applications operating on this license-free channel~\cite{interferencelicensefree}. The determination of the transmission power over the default license-free channel can be referred to existing solutions~\cite{powerallocation_2008tvt, powerallocation_2012jsac}, which is out of the scope of this paper. 

\subsection{Energy Consumption Model}
The energy consumption of sensor nodes mainly includes four parts: the energy consumption for spectrum sensing, spectrum switching, data transmission and reception. For each sensor node, we use $e_s$ to denote the energy consumption for sensing a licensed channel, which is fixed and the same for different channels. Meanwhile, sensor nodes need to consume energy to configure the radio and switch to a new channel. Therefore, we use $e_w$ to denote the energy consumption that a sensor node consumes for channel switching. For $s_j$, the data transmission energy consumption $E_{j,t}$ is based on the classic energy model~\cite{shu2006joint}, i.e., $E_{j,t} = (P_j + P_{j,c}) \cdot t_{j,x}$,
where $t_{j,x}$ is the data transmission time, $P_j$ is the transmission power and $P_{j,c}$ is the circuit power at $s_j$. Following a similar model in \cite{cui2005energy}, $P_{j,c}$ can be calculated as $P_{j,c} = \alpha_j + (\dfrac{1}{\eta} - 1)\cdot P_j$, where $\alpha_j$ is a transmission-power-independent component that accounts for the power consumed by the circuit, and $\eta$ is the power amplifier efficiency. Physically, $\eta$ is determined by the drain efficiency of the RF power amplifier and the modulation scheme~\cite{cui2005energy,shu2006joint}. Therefore, we have the energy consumption of data transmission at $s_j$ is
\begin{align}
\label{energymodel}
E_{j,t} = \dfrac{1}{\eta}\cdot P_j\cdot t_{j,x} + \alpha_j \cdot t_{j,x} = \dfrac{1}{\eta}(P_j + \alpha_{c,j})\cdot t_{j,x},
\end{align}
where $ \alpha_{c,j} = \eta \cdot \alpha_j$ is defined as the equivalent circuit power consumption for data transmission. The energy consumption for data receiving is related to the data that a sensor node receives~\cite{ju2014ksii}. If $s_j$ receives $l$ bits data, the energy consumption is $E_{j,r} = e_c \cdot l$, where $e_c$ is the circuit power for data receiving.

\subsection{Problem Statement}
During each data transmission period, $s_j$ generates $A_j$ sensed data to report to the sink. Since data transmission is independent among different periods, our objective is to efficiently transmit $A = \sum_{s_j \in \mathcal{N}}A_j$ data to the sink within a data transmission period, by determining the channel sensing and accessing decision according to the channel condition of $C_0$. As an indicator of the time-varied channel condition, the packet loss rate of $C_0$ is measured by each pair CM-CH and CH-Sink at the beginning of each transmission period, which is assumed to be stable in a data transmission period but may vary over different periods.

According to the network model, the data transmission consists of two phases: intra-cluster data transmission and inter-cluster data transmission. Therefore, we focus on reducing the energy consumption during the two phases, respectively. 
 Specifically, we aim to address the following two issues. 

(1) During the intra-cluster data transmission, each cluster $L_i$ should determine whether to sense and access a licensed channel according to the packet loss rate of $C_0$. When $L_i$ decides to sense and access a license channel, 
the channel sensing and accessing sequence should be determined for $L_i$ to minimize the energy consumption of intra-cluster data transmission in a probabilistic way.  

(2) During the inter-cluster data transmission, the channel sensing and accessing decision should also be carefully determined for potential energy consumption reduction. Since CHs can adjust their transmission power when accessing a licensed channel, the transmission power control and dynamic channel accessing should be jointly considered to minimize the energy consumption of inter-cluster data transmission.

To ease the presentation, the key notations are listed in Table~\ref{table1}.
\begin{table}[!t]
    \caption{The Key Notations}
    \centering
    \small
    \begin{tabular}{p{1.1cm}|p{6.9cm}}
         \hline
         \hline 
         \textbf{Notation} & \textbf{Definition} \\
         \hline
         \hline
         $\mathcal{N}$ & Set of sensor nodes $\{s_1, s_2, ..., s_n\}$\\
         $\mathcal{L}$ & Set of clusters $\{L_1, L_2, ..., L_m\}$\\
         $\mathcal{C}$ & Set of licensed channels $\{C_1, C_2, ..., C_k\}$\\
         $H_i$ & Cluster head of $L_i$ \\
         $\mathcal{N}_i$ & Set of sensor nodes in $L_i$\\
         $C_0$ & The default license-free channel \\
         $B_x$ & Bandwidth of channel $C_x$ \\
         $e_s$ & Energy consumption for sensing a channel \\
         $e_w$ & Energy consumption for channel switching \\
         $R_{j,i,x}$ & Data transmission rate from $s_j$ to $H_i$ over $C_x$ \\
         $h_{j,i,x}^2$ & Average channel gain between $s_j$ and $H_i$ over $C_x$\\
         $\sigma_{x}^2$ & Average noise power over $C_x$\\
         $\lambda_{j,i,0}$ & Packet loss rate between $s_j$ and $H_i$ over $C_0$\\
         $\bm{y}$ & A fixed number of sensor nodes chosen for cooperative channel sensing\\
         $P_j$ & Transmission power of $s_j$ over $C_0$ during intra-cluster data transmission\\
         $A_j$ & $s_j$'s sensed data needed to be transmitted during the intra-cluster data transmission\\
         $E_{1,0}(i)$ & Energy consumption of $L_i$ by performing intra-cluster data transmission over $C_0$ \\
         $p_r$ & PU protection requirement \\
         $T_x$ & The determined maximum channel available time of $C_x$ under the required $p_r$ \\
         $t_{j,x}$ & Allocated transmission time of $s_j$ over an accessed licensed channel $C_x$ \\
         $E_{1,x}(i)$ & Energy consumption of $L_i$ by performing intra-cluster data transmission over $C_x$ \\
         $\overline{E_{1,x}}(i)$ & Expected energy consumption of $L_i$ by performing intra-cluster data transmission over $C_x$ \\
         $A_i$ & Aggregated data of $H_i$ needed to be transmitted during inter-cluster data transmission \\
         $E_{2,0}$ & Energy consumption of inter-cluster data transmission over $C_0$ \\
         $E_{2,x}$ &  Energy consumption of inter-cluster data transmission over $C_x$\\
         $\overline{E_{2,x}}$ & Expected energy consumption of inter-cluster data transmission over $C_x$ \\
         $E_{2,x}'$ & Equivalent energy consumption for optimizing $E_{2,x}$\\
         $P_{i,0}$ & Transmission power of $H_i$ over $C_0$ during inter-cluster data transmission\\
         $P_{i,x}$ & Allocated transmission power of $H_i$ over $C_x$ during inter-cluster data transmission\\
         $t_{i,x}$ & Allocated transmission time of $H_i$ over an accessed licensed channel $C_x$ \\
         $z^*$ & The optimal value of $z$\\
         \hline
    \end{tabular}
    \label{table1}
\end{table}

\section{Dynamic Channel Accessing for Intra-Cluster Data Transmission}
\label{sec4}
In this section, we analyze the energy consumption for intra-cluster data transmission, and determine the condition when a cluster should sense and switch to a licensed channel for potential energy saving. Then, based on the analysis, a dynamic channel sensing and accessing scheme is proposed to reduce the energy consumption of intra-cluster data transmission. 

\subsection{Energy Consumption Analysis of Intra-cluster Data Transmission}
Since the objective is to reduce the energy consumption for intra-cluster data transmission in the clusters, by opportunistically sensing and accessing to the licensed channels when the packet loss rate of $C_0$ is high, the energy consumption should be calculated first if a cluster $L_i$ ($L_i \in \mathcal{L}$) gathers the intra-cluster data over $C_0$. \textbf{Proposition~\ref{prop1}} provides the energy consumption information for the clusters, given the measured packet loss rate of $C_0$ at each communication link. 
\begin{Prop}
\label{prop1}
For each cluster $L_i$, if the data amount of a cluster member $s_j$ $(s_j \in \mathcal{N}_i)$ is $A_j$, and the packet loss rate between $s_j$ and the cluster head $H_i$ over $C_0$ is $\lambda_{j,i,0}$, the energy consumption for intra-cluster data transmission is 
\begin{align}
\label{eq.ec}
E_{1,0}(i) = \sum_{s_j \in \mathcal{N}_i} \dfrac{A_j \cdot ER_{1,j}}{(1-\lambda_{j,i,0})},
\end{align}
where $ER_{1,j} = \dfrac{\eta \cdot R_{j,i,0} \cdot e_c + P_j + \alpha_{c,j}}{\eta \cdot R_{j,i,0}}$ means the energy consumption rate of $s_j$ for transmitting intra-cluster data, $R_{j,i,0} = B_0 \log {\left ( 1 + h_{j,i,0}^2 P_{j}/\sigma_0^2\right )}$ and $P_j$ is the transmission power of $s_j$.
\end{Prop}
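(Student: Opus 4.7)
The plan is to derive the expression by accounting separately for (i) the effective number of bits a cluster member must transmit in the face of packet losses and (ii) the per‑bit energy cost, which has both a transmit component (at $s_j$) and a receive component (at $H_i$).

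First, I would fix a cluster member $s_j \in \mathcal{N}_i$ with application payload $A_j$ and per‑link packet loss rate $\lambda_{j,i,0}$. Since each transmitted bit is successfully received with probability $1-\lambda_{j,i,0}$, the expected number of bits $s_j$ must push onto $C_0$ in order to deliver $A_j$ useful bits is $A_j/(1-\lambda_{j,i,0})$; this is the standard geometric‑retransmission argument and is the only probabilistic ingredient in the proof.

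Next, I would compute the energy spent per transmitted bit. From Eq.~(5), the transmit energy rate is $(P_j+\alpha_{c,j})/\eta$ per unit time, and the Shannon rate on $C_0$ is $R_{j,i,0}=B_0\log\!\left(1+h_{j,i,0}^2P_j/\sigma_0^2\right)$, so the per‑bit transmit energy is $(P_j+\alpha_{c,j})/(\eta R_{j,i,0})$. The cluster head incurs a per‑bit receive cost of $e_c$ (from the $E_{j,r}=e_c\cdot l$ model, applied to all bits actually put on the air, since the CH radio must be on for every attempted reception). Adding these two contributions and putting them over the common denominator $\eta R_{j,i,0}$ yields exactly
\[
ER_{1,j}=\frac{\eta R_{j,i,0}\,e_c+P_j+\alpha_{c,j}}{\eta R_{j,i,0}},
\]
which is the per‑bit energy rate claimed in the statement.

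Finally, I would multiply the per‑bit cost $ER_{1,j}$ by the expected number of transmitted bits $A_j/(1-\lambda_{j,i,0})$ to get the total energy spent on link $(s_j,H_i)$, and then sum over $s_j\in\mathcal{N}_i$ to obtain Eq.~(\ref{eq.ec}). The main (minor) obstacle worth being explicit about is the modelling choice that the CH receiver energy is charged to \emph{all} transmitted bits rather than only the successfully decoded ones; I would justify this by noting that the receiver cannot distinguish corrupted from clean bits before demodulation, so its radio must be active throughout every (re)transmission, which is consistent with the $E_{j,r}=e_c\cdot l$ model used in the paper.
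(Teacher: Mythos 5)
Your proposal is correct and follows essentially the same route as the paper's own proof: inflate the payload by the geometric retransmission factor $1/(1-\lambda_{j,i,0})$, charge transmit energy at rate $(P_j+\alpha_{c,j})/\eta$ over the resulting airtime $A_j/((1-\lambda_{j,i,0})R_{j,i,0})$, charge reception at $e_c$ per transmitted bit, and sum over $\mathcal{N}_i$. Your explicit remark that the receiver cost is applied to all attempted (not just successfully decoded) bits matches the paper's Eq.~(\ref{eq.ecr}) exactly, so nothing is missing.
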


\begin{proof}
For each $s_j \in \mathcal{N}_i$, it generates $A_i$ data to transmit during a data transmission period. Since the packet loss rate of $C_0$ is $\lambda_{j,i,0}$, the expected number of transmission attempts for each packet is $1 / (1-\lambda_{j,i,0})$. Therefore, the expected transmitted data is $A_j/(1-\lambda_{j,i,0})$. If the transmission power of $s_j$ is $P_j$, the data transmission time is $\dfrac{A_j}{(1-\lambda_{j,i,0}) R_{j,i,0}}$. Therefore, for all the sensor nodes in $L_i$, the energy consumption for data transmission is
\begin{align}
\label{eq.ect}
\nonumber
e_{1,t}(i) &= \sum_{s_j \in \mathcal{N}_i} \left [\dfrac{A_j \cdot (P_j + \alpha_{c,j})}{\eta \cdot (1-\lambda_{j,i,0}) \cdot R_{j,i,0}} \right ]\\
&=\sum_{s_j \in \mathcal{N}_i} \left [\dfrac{A_j \cdot (P_j + \alpha_{c,j})}{\eta (1-\lambda_{j,i,0}) B_0 \log {\left ( 1 + h_{j,i,0}^2 P_j/\sigma_0^2\right )}} \right ].
\end{align}

Additionally, the energy consumption for receiving the sensed data is 
\begin{align}
\label{eq.ecr}
e_{1,r}(i) = \sum_{s_j \in \mathcal{N}_i} \left (A_j \cdot \dfrac{1}{1-\lambda_{j,i,0}} \cdot e_c \right ).
\end{align}

Therefore, the total energy consumption of intra-cluster data transmission over $C_0$ is $E_{1,0}(i) = e_{1,t}(i) + e_{1,r}(i)$, which can be transformed to Eq.~(\ref{eq.ec}). It completes the proof.
\end{proof}

\subsection{Optimized Transmission Time Allocation for Intra-cluster Data Transmission}
\label{sec.optimaltransmissiontime}

According to Eq.~(\ref{eq.ec}), the energy consumption for intra-cluster data transmission in $L_i$ grows sharply with the increasing packet loss rate of $C_0$. If we aim to access licensed channel $C_x$ to reduce the intra-cluster energy consumption in $L_i$, we should first address the problem: how to allocate the transmission time of CMs to minimize the energy consumption with the consideration of PU protection. In this section, we focus on determining the optimized energy consumption if $L_i$ accesses $C_x$ for data transmission.

When $L_i$ accesses to $C_x$, the channel available duration (CAD) of $C_x$, denoted by $T_x$, is limited to control the interference probability to PUs, due to the fact that PUs may return at any time point and cause an interference with a certain probability. A longer CAD indicates a higher interference probability to PUs. We define $p_r$ as the PU protection requirement, which means the interference probability to PU during $T_x$ should be no larger than $p_r$. According to the cognitive radio model, the PU traffic is an independent and identically distributed ON/OFF process, with $v_x$ as the mean idle time. Thus, if $C_x$ is accessed for $T_x$, the interference probability of $C_x$ is $1-e^{-v_x \cdot T_x}$. Meanwhile, the probability that $C_x$ is idle and detected as idle is $p_{off}^x \cdot (1-F_f^x)$. Therefore, the interference probability during $T_x$ is $p_{off}^x \cdot (1-F_f^x) \cdot (1-e^{-v_x \cdot T_x})$, and the PU protection requirement is $p_{off}^x \cdot (1-F_f^x) \cdot (1-e^{-v_x \cdot T_x}) \le p_r$. Based on that, the maximum CAD of $C_x$ is
\begin{align}
    T_x = -\dfrac{1}{v_x}\ln\left (1-\dfrac{p_r}{p_{off}^x\cdot (1-F_f^x)} \right ).
\label{eq.availabletime}
\end{align}

If $T_x$ is large enough to guarantee the complement of the intra-cluster data transmission in $L_i$, all the data of CMs in $L_i$ can be transmitted over $C_x$. Otherwise, $T_x$ should be carefully allocated to the CMs of $L_i$ to minimize the energy consumption, since CMs have different amounts of sensed data and different transmission rates, both of which can directly impact the energy consumption of intra-cluster data transmission. In the following, we mathematically formulate the transmission time allocation problem as an optimization problem, which will be solved to minimize the energy consumption of intra-cluster data transmission.

For channel $C_x$ and cluster $L_i$, let $t_{j,x}$ be the allocated transmission time of $s_j$ ($s_j \in \mathcal{N}_i$) over $C_x$. Then, the energy consumption of $s_j$ for data transmission over $C_x$ is $e_{j,x} = \dfrac{1}{\eta}(P_j + \alpha_{c,j})\cdot t_{j,x}$. The residual data of $s_j$, if any, will be transmitted over $C_0$, with the amount of $ A_j - R_{j,i,x} \cdot t_{j,x}$. The associated energy consumption for transmitting the residual data over $C_0$ is $e_{j,0} = \dfrac{\left ( A_j - R_{j,i,x} t_{j,x}\right )\cdot ER_{1,j}}{1-\lambda_{j,i,0}}$. Let $E_{1,x}(i)$ be the total energy consumption for intra-cluster data transmission in $L_i$ by accessing $C_x$. Then, we have $E_{1,x}(i) = \sum_{s_j \in \mathcal{N}_i} \left ( e_{j,x} + e_{j,0} \right )$. There are also some constraints for the transmission time allocation of $T_x$. For each CM $s_j \in \mathcal{N}_i$, the successfully transmitted data of $s_j$ during the allocated time $t_{j,x}$ should be no larger than the generated data, which means
\begin{align}
\label{eq.rateconstraint}
R_{j,i,x} \cdot t_{j,x} \le A_i,~~~\forall s_j \in \mathcal{N}_i.
\end{align}
Meanwhile, the allocated transmission time $t_{j,x}$ of $s_j$ should be no less than 0 and the total allocated transmission time of $L_i$ should be no larger than $T_x$. Thus, we have
\begin{align}
\label{eq.timeconstraint}
\begin{cases}
  \sum_{s_j \in \mathcal{N}_i} t_{j,x} \le T_x,\\
  t_{j,x} \ge 0,~~~\forall s_j \in \mathcal{N}_i.
\end{cases}
\end{align}

We aim to determine the time allocation vector $\bm{t}_x = \{t_1, ..., t_{|\mathcal{N}_i|}\}$ to minimize the energy consumption of intra-cluster data transmission, which can be formulated as the following optimization problem:
\begin{align*}
\textbf{(TAP)~} \text{minimize~} &E_{1,x}(i) = \sum_{s_j \in \mathcal{N}_i} \left ( e_{j,x} + e_{j,0} \right )  \\\nonumber
{\rm s.t.}~&~(\ref{eq.rateconstraint})~\text{and}~(\ref{eq.timeconstraint}).
\end{align*}

It can be seen that \textbf{(TAP)} is a classic linear programming problem. The well-known Simplex method can be directly applied to solve this problem~\cite{dantzig1998linear}. In the following, we use $\bm{t}_x^* = \{t_1^*, ..., t_{|\mathcal{N}_i|}^*\}$ and $E_{1,x}^*(i)$ to denote the optimal time allocation and energy consumption for intra-cluster data transmission by accessing $C_x$, respectively. 

\subsection{Analysis of Channel Sensing and Switching Decision for Intra-cluster Data Transmission}
In this section, we focus on determining the condition when sensor nodes should sense and switch to a licensed channel for intra-cluster data transmission. By solving \textbf{(TAP)}, we can obtain the optimal energy consumption for transmitting intra-cluster data over $C_x$. However, due to the uncertain availability of $C_x$ and the energy consumption for channel sensing and switching, we can only obtain the expected energy consumption of intra-cluster data transmission by accessing $C_x$, if considering these two factors. According to the cognitive radio model, once $L_i$ decides to sense a licensed channel, a number of CMs $\bm{y}$ should be chosen to perform cooperative sensing to achieve better sensing performance. Here, $|\bm{y}|$ is a system parameter to meet the constraint of Eq.~(\ref{eq.yconstraint}), and we assume $|\bm{y}| \le \min_{C_i \in \mathcal{C}} {|\mathcal{N}_i|}$. 

Recall that, reducing the energy consumption of intra-cluster data transmission is the primary objective for channel sensing and switching. To determine if the energy consumption can be improved by sensing and switching to a licensed channel, we first define the expected accessible channel that is expectedly profitable for a cluster to sense and access.

\begin{Def}
\label{def1}
For cluster $L_i$, an expected accessible channel is a channel, by accessing which the expected energy consumption for intra-cluster data transmission can be reduced, taking account of the energy consumption for channel sensing and switching, as well as the idle detection probability of this channel by cooperative sensing.
\end{Def}

According to the definition, the following proposition determines the expected accessible channels for a specific cluster. 

\begin{Prop}
\label{prop2}
For channel $C_x$, given the detection probability $P_d^x$ and the false alarm probability $P_f^x$, the expected energy consumption of intra-cluster data transmission in $L_i$ by accessing $C_x$ is
\begin{equation}
\label{eq.expectedenergyconsumption}
\begin{aligned}
\overline{E_{1,x}}(i) &= E_{1,0}(i) + Y_{j,i,x} F_s^x t_{j,x}^* + 2|\mathcal{N}_i| e_w F_s^x + |\bm{y}| e_s,
\end{aligned}
\end{equation}
and $C_x$ is an expected accessible channel for $L_i$, if we have
\begin{equation}
\label{eq.accessiblechannel}
\begin{aligned}
Y_{j,i,x} \cdot F_s^x \cdot t_{j,x}^* + 2|\mathcal{N}_i| \cdot e_w \cdot F_s^x + |\bm{y}| \cdot e_s < 0,
\end{aligned}
\end{equation}
where $F_s^x = p_{off}^x \cdot (1 - F_f^x)$ and 
\begin{align*}
Y_{j,i,x} = \sum_{s_j \in \mathcal{N}_i} \Big ( \dfrac{(P_j+\alpha_{c,j})(1-\lambda_{j,i,0})-\eta ER_{1,j} R_{j,i,x}}{(1-\lambda_{j,i,0})\cdot \eta} \Big ).
\end{align*}
\end{Prop}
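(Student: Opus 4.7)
My plan is to unpack the expectation over the randomness of the sensing outcome and match it term-by-term against the stated expression, using the transmission-time solution $\bm{t}_x^*$ from Section~\ref{sec.optimaltransmissiontime} as the already-optimized inner cost when the channel is in fact accessed.

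First I would set up the three disjoint cost components contributing to $\overline{E_{1,x}}(i)$. Cooperative sensing is always carried out by the chosen set $\bm{y}$ (independent of the outcome), contributing a deterministic $|\bm{y}|\cdot e_s$. With probability $F_s^x = p_{off}^x(1-F_f^x)$ the channel $C_x$ is simultaneously idle \emph{and} declared idle by the OR-rule fusion, so the cluster actually switches: every CM switches out to $C_x$ before transmission and back to $C_0$ after, giving $2|\mathcal{N}_i|\,e_w$ switching energy, plus the optimized transmission energy $E_{1,x}^*(i)$ obtained from \textbf{(TAP)}. With the complementary probability $1-F_s^x$ the cluster does not access $C_x$ and incurs the baseline $E_{1,0}(i)$ from Proposition~\ref{prop1} (the other three fusion outcomes -- false alarm with idle channel, correct detection with busy channel, and misdetection with busy channel -- are aggregated into this no-access branch, which is consistent with the paper's use of $F_s^x$ as the single ``useful access'' probability; the misdetection share is already bounded by the constraint in Eq.~(\ref{eq.yconstraint})).

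Next I would assemble the expectation
\[
\overline{E_{1,x}}(i) = |\bm{y}|e_s + F_s^x\bigl[E_{1,x}^*(i) + 2|\mathcal{N}_i|e_w\bigr] + (1-F_s^x)E_{1,0}(i),
\]
rewrite it as $E_{1,0}(i) + |\bm{y}|e_s + 2|\mathcal{N}_i|e_w F_s^x + F_s^x\bigl[E_{1,x}^*(i) - E_{1,0}(i)\bigr]$, and then identify the bracketed savings. Expanding $E_{1,x}^*(i) = \sum_{s_j \in \mathcal{N}_i}(e_{j,x}+e_{j,0})$ with $e_{j,x} = \tfrac{1}{\eta}(P_j+\alpha_{c,j})t_{j,x}^*$ and $e_{j,0} = (A_j - R_{j,i,x}t_{j,x}^*)\,ER_{1,j}/(1-\lambda_{j,i,0})$, and subtracting the baseline per-node term $A_j\,ER_{1,j}/(1-\lambda_{j,i,0})$ from Eq.~(\ref{eq.ec}), the $A_j$ contributions cancel cleanly and the per-node residual factors $t_{j,x}^*$, yielding exactly the summand in the definition of $Y_{j,i,x}$. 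This is the step that recovers the compact shorthand $Y_{j,i,x}\,F_s^x\,t_{j,x}^*$ in the statement (understood as the aggregate across $\mathcal{N}_i$). Finally, Definition~\ref{def1} asks that $\overline{E_{1,x}}(i) < E_{1,0}(i)$; plugging in the expression just derived and cancelling $E_{1,0}(i)$ leaves precisely Eq.~(\ref{eq.accessiblechannel}).

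The routine calculations are the algebraic cancellations; the only substantive place I expect to pause is justifying the probability weight $F_s^x$ in both the switching and the transmission terms. Specifically, I want to be explicit that switching costs are charged only when the cluster believes the channel is idle \emph{and} this belief is correct in the sense the paper uses for accounting, so $F_s^x$ -- rather than the raw ``detected idle'' probability $1-p_{off}^x F_f^x - p_{on}^x F_d^x$ -- is the appropriate weight. I would appeal to the PU-protection constraint (Eq.~(\ref{eq.yconstraint})) to argue the misdetection mass is negligible for the energy accounting, which makes $F_s^x$ a consistent proxy for the ``effective access'' event and keeps the statement self-contained.
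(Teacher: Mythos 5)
Your proposal is correct and follows essentially the same route as the paper: the same conditional-expectation decomposition $\overline{E_{1,x}}(i) = F_s^x\bigl(E_{1,x}^*(i) + |\bm{y}|e_s + 2|\mathcal{N}_i|e_w\bigr) + (1-F_s^x)\bigl(E_{1,0}(i) + |\bm{y}|e_s\bigr)$, the same cancellation of the $A_j$ terms to recover $Y_{j,i,x}$, and the same derivation of the accessibility condition from $\overline{E_{1,x}}(i) < E_{1,0}(i)$. Your explicit justification of the weight $F_s^x$ via the misdetection bound in Eq.~(\ref{eq.yconstraint}) matches what the paper relegates to a footnote.
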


\begin{proof}
For channel $C_x$ with available time $T_x$, if it is used for intra-cluster data transmission in $L_i$, the optimal time allocation solution can be determined as $\bm{t}_x^* = \{t_1^*, ..., t_{|\mathcal{N}_i|}^*\}$ by solving \textbf{(TAP)}. Then, the optimal energy consumption for intra-cluster data transmission is
\begin{equation}
\label{eq.optimalx}
\begin{aligned}
E_{1,x}^*(i) = \sum_{s_j \in \mathcal{N}_i} \Big [ \dfrac{(P_j + \alpha_{c,j}) t_{j,x}^*}{\eta} 
+ \dfrac{\left ( A_j - R_{j,i,x} t_{j,x}^*\right ) ER_{1,j}}{1-\lambda_{j,i,0}} \Big ].
\end{aligned}
\end{equation}

If we consider the energy consumption for channel sensing and switching, the total energy consumption for using $C_x$ in intra-cluster data transmission is $E_{1,x}^*(i) + |\bm{y}| \cdot e_s + 2|\mathcal{N}_i| \cdot e_w$. Meanwhile, if $L_i$ decides to sense $C_x$, the probability that $C_x$ is detected as available is $F_s^x = p_{off}^x \cdot (1 - F_f^x)$, according to the cognitive radio model\footnote{When $C_x$ is detected as idle by cooperative sensing, there is also a probability that $C_x$ is not available at this time, which is $p^x_{on} \cdot F^x_m$. However, this probability is limited below $F_I$ by Eq.~(\ref{eq.yconstraint}), thus, we ignore it in the analysis of this work.}. It means that we have a probability $F_s^x$ to use $C_x$ and a probability $1-F_s^x$ to stay in channel $C_0$. Therefore, the expected energy consumption for sensing and switching to $C_x$ for intra-cluster data transmission is 
\begin{equation}
\label{eq.expectedec}
\begin{aligned}
\overline{E_{1,x}}(i) &= F_s^x \cdot \left (E_{1,x}^*(i) + |\bm{y}| \cdot e_s + 2|\mathcal{N}_i| \cdot e_w \right )\\
 &+ (1 - F_s^x) \cdot \left (E_{1,0}(i) + |\bm{y}| \cdot e_s \right )
\end{aligned}
\end{equation}

Substituting $E_{1,0}(i)$ and $E_{1,x}^*(i)$ according to Eq.~(\ref{eq.ec}) and~(\ref{eq.optimalx}), respectively, then Eq.~(\ref{eq.expectedenergyconsumption}) can be proved. If $C_x$ is an expected accessible channel for $L_i$, the expected energy consumption should be less than $E_{1,0}(i)$, i.e., $\overline{E_{1,x}}(i) < E_{1,0}(i)$. Substituting $E_{1,0}(i)$ and $\overline{E_{1,x}}(i)$ with Eq.~(\ref{eq.ec}) and~(\ref{eq.expectedec}), we can obtain Eq.~(\ref{eq.accessiblechannel}). 
\end{proof}

Based on \textbf{Proposition~\ref{prop2}}, we have the following corollary to determine the condition in which the cluster $L_i$ should sense licensed channels for intra-cluster data transmission.

\begin{Cor}
\label{cor1}
If there exists such channel $C_x \in \mathcal{C}$ that is an expected accessible channel of $L_i$, $L_i$ should sense new channels for intra-cluster data transmission. 
\end{Cor}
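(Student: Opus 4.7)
The plan is to derive \textbf{Corollary~\ref{cor1}} as a direct consequence of \textbf{Proposition~\ref{prop2}} together with \textbf{Definition~\ref{def1}}. First, I would make precise what ``$L_i$ should sense new channels'' means under the energy-efficiency criterion adopted throughout this section: the cluster should trigger a sensing round whenever the resulting expected total energy, which accounts for the sensing cost $|\bm{y}|\cdot e_s$, the possible switching cost $2|\mathcal{N}_i|\cdot e_w$, and the two outcomes of cooperative detection (with probabilities $F_s^x$ and $1-F_s^x$), is strictly less than $E_{1,0}(i)$, the baseline energy of staying on $C_0$.

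Next, I would invoke \textbf{Proposition~\ref{prop2}} directly. By hypothesis there exists $C_x\in\mathcal{C}$ that is an expected accessible channel of $L_i$, so inequality~(\ref{eq.accessiblechannel}) holds for this $C_x$. Substituting this inequality into the closed-form expression for $\overline{E_{1,x}}(i)$ in Eq.~(\ref{eq.expectedenergyconsumption}) yields $\overline{E_{1,x}}(i) < E_{1,0}(i)$ at once, i.e., the expected energy cost of attempting to access $C_x$ is strictly smaller than the cost of remaining on $C_0$. Since the sensing cost and the probabilistic branching over the sensing outcome are already absorbed into $\overline{E_{1,x}}(i)$, this comparison is exactly the decision rule formalised above, so the energy-minimising action is to sense, which is what the corollary asserts.

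I do not expect any real obstacle in the argument itself, because \textbf{Proposition~\ref{prop2}} has already done all of the probabilistic and cost bookkeeping. The only subtlety I would highlight explicitly is that the \emph{existence} of a single expected accessible channel is sufficient to justify sensing: one does not need every licensed channel in $\mathcal{C}$ to be profitable. The decision of \emph{whether} to sense and the subsequent decision of \emph{which} channel to actually target (handled by the dynamic sensing-and-accessing sequence proposed later) are logically distinct, and the corollary addresses only the former.
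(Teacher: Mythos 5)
Your proof is correct and follows essentially the same route as the paper: both arguments simply combine \textbf{Definition~\ref{def1}} with \textbf{Proposition~\ref{prop2}} to conclude that the existence of an expected accessible channel $C_x$ gives $\overline{E_{1,x}}(i) < E_{1,0}(i)$, which justifies sensing. Your version is merely a little more explicit about the decision criterion and the sufficiency of a single profitable channel, but the substance is identical.
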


\begin{proof}
According to \textbf{Definition~\ref{def1}} and \textbf{Proposition~\ref{prop2}}, the expected energy consumption for intra-cluster data transmission can be reduced in $L_i$ by sensing and switching to the channel $C_x$, if $C_x$ is an expected accessible channel of $L_i$. Therefore, if there exists such channel $C_x \in \mathcal{C}$ that can meet the constraint of Eq.~(\ref{eq.accessiblechannel}), $L_i$ should sense this licensed channel for the potential energy efficiency improvement.
\end{proof}

\subsection{Dynamic Channel Accessing for Intra-cluster Data Transmission}
\label{sec4.4}
In this section, we propose a dynamic channel sensing and accessing scheme for the intra-cluster data transmission of each cluster. 

With \textbf{Corollary~\ref{cor1}}, each cluster $L_i$ can decide whether it should sense a licensed channel for intra-cluster data transmission according to the packet loss rate of the default channel $C_0$. However, if there exist a set of expected accessible channels $\mathcal{C}'$ ($\mathcal{C}' \in \mathcal{C}$) for $L_i$, the problem is which one is the most profitable to sense and access for intra-cluster data transmission. \textbf{Proposition~\ref{prop2}} indicates that the channel with the lowest expected energy consumption $\overline{E_{1,x}}(i)$ should be sensed first. However, $\overline{E_{1,x}}(i)$ is only an expected value and the availabilities of licensed channels are totally opportunistic, which means the expected accessible channels may be detected as unavailable through spectrum sensing. Therefore, we arrange the expected accessible channel set $C_x \in \mathcal{C}'$ according to the increasing order $\overline{E_{1,x}}(i)$, and $L_i$ senses the channels of $\mathcal{C}'$ one by one according to the order until detecting a channel as idle. Then, $L_i$ switches to this channel for intra-cluster data transmission. Specifically, we discuss the dynamic channel sensing and accessing for intra-cluster data transmission in $L_i$ in the following situations.
\begin{enumerate}[(i)]
\item If $\mathcal{C}' = \emptyset$, it means that there is no expected accessible channel for $L_i$. The cluster does not sense any licensed channel and uses $C_0$ for intra-cluster data transmission. 
\item If $\mathcal{C}' \neq \emptyset$ and all the channels of $\mathcal{C}'$ are sensed as unavailable, $L_i$ transmits the intra-cluster data over $C_0$.
\item If $\mathcal{C}' \neq \emptyset$ and $C_x$ ($C_x \in \mathcal{C}'$) is sensed as idle by $L_i$, $L_i$ switches to $C_x$ and transmits the intra-cluster data over $C_x$. If the intra-cluster data is not completed after $T_x$, the channel sensing and accessing decision should be performed again. For each CM $s_j \in \mathcal{N}_i$, we denote the residual data of $s_j$ as $A_j'$. Then, we use $A_j'$ in \textbf{Propositions~\ref{prop1}} and~\textbf{\ref{prop2}} to determine the set of expected accessible channels $\mathcal{C}'$, and repeat the channel sensing and accessing according the three situations until the intra-cluster data transmission is finished in $L_i$.
\end{enumerate}

Based on the discussion above, Fig.~\ref{fig.decisionflow} shows a flow chart to illustrate the procedures. \textbf{Algorithm~\ref{algm1}} presents the main idea of the dynamic channel sensing and accessing scheme for intra-cluster data transmission.
\begin{figure}
\centering
\includegraphics[width=0.45\textwidth]{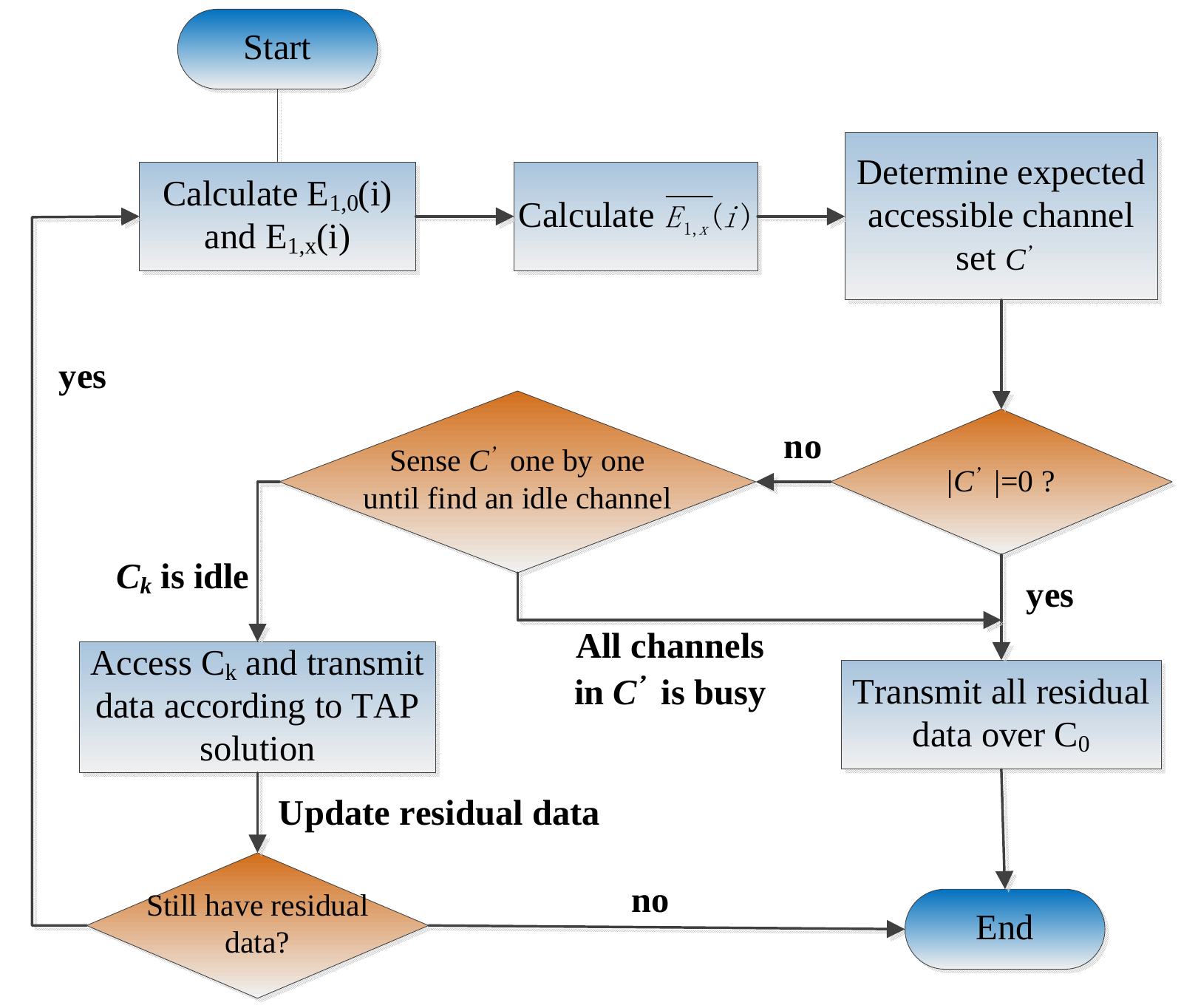}
\caption{Procedures of the dynamic channel sensing and accessing scheme.}
\label{fig.decisionflow}
\end{figure}
\begin{algorithm}[!ht]
    \caption{Dynamic Channel Sensing and Accessing for Intra-cluster Data Transmission.}
    \begin{algorithmic}[1]
        \REQUIRE
        For each $s_j$ and $C_x$, the sampling rate $sr_j$, the expected transmission rate $R_{j,i,x}$, packet loss rate $\lambda_{j,i,x}$, available transmission duration $T_x$, and other parameters in cognitive model and energy consumption model.
        \ENSURE
        Determining the channel sensing and accessing sequence for intra-cluster data transmission.
        \FORALL{$L_i \in \mathcal{L}$}
            \STATE{Calculate the energy consumption of intra-cluster data transmission $E_{1,0}(i)$ over $C_0$;}
            \FORALL{$C_x \in \mathcal{C}$}
                \STATE{Determine $E_{i,x}(i)$ and $\overline{E_{i,x}}(i)$ by solving \textbf{(TAP)} and according to \textbf{Proposition~\ref{prop2}}, respectively;}
            \ENDFOR
            \STATE{Determine the expected accessible channel set $\mathcal{C}'$ according to \textbf{Proposition~\ref{prop2}}, and reorder $\mathcal{C}'$ as $\overline{\mathcal{C}}'$ according to increasing order of $\overline{E_{i,x}}(i)$;}
            \STATE{$k~=~1$;}
            \WHILE{$k~\le~|\overline{\mathcal{C}}'|$}
                \STATE{Sense the $k$-th channel $C_k$ of $\overline{\mathcal{C}}'$;}
                \IF{$C_k$ is idle}
                    \STATE{Go to \textbf{step~18};}
                \ENDIF
                \STATE{$k~=~k~+~1$;}
            \ENDWHILE
            \IF{$~\overline{|\mathcal{C}}'|==0$ \textbf{~or~} $k>|\overline{\mathcal{C}}'|$}
                    \STATE{Transmit the residual intra-cluster data over the default channel $C_0$;}
            \ELSE
                    \STATE{Transmit the intra-cluster data over the channel $C_k$, and allocate the transmission time $t_{j,k}$ to each sensor node $N_j \in L_i$;}
                    \IF{The CAD of $C_k$ is expired \textbf{~and~} the intra-cluster data transmission of $L_i$ is not completed}
                        \STATE{Go to \textbf{step~2};}
                    \ENDIF
            \ENDIF
        \ENDFOR
    \end{algorithmic}
    \label{algm1}
\end{algorithm}

\section{Joint Power Allocation and Channel Accessing for Inter-Cluster Data Transmission}
\label{sec5}
After intra-cluster data transmission, CHs aggregate the received data, and then send the aggregated data to the sink. Based on the analysis of intra-cluster data transmission, in this section, we focus on the channel accessing problem to improve the energy efficiency of inter-cluster data transmission. 

\subsection{Analysis of Channel Sensing and Switching Decision for Inter-cluster Data Transmission}
By considering all the CHs and the sink as a cluster where CHs are CMs and the sink is the CH, the inter-cluster data transmission is similar to the intra-cluster data transmission. However, since there is no interference for TDMA-based inter-cluster transmission over licensed channels, CHs can adjust their transmission power to transmit their data to the sink when accessing to a licensed channel. As a result, the analysis of sensing and switching decision becomes different for inter-cluster data transmission.

Following the analytical path of intra-cluster data transmission, we first obtain the energy consumption of inter-cluster data transmission over $C_0$ in the following proposition. According to our model, CHs do not adjust their power when they transmit over $C_0$, to avoid potential interference to other applications transmitting over this license-free channel. Therefore, we have the following proposition.
\begin{Prop}
\label{prop3}
Given the data aggregation rate of $H_i$ $(L_i \in \mathcal{L})$ as $\psi_i$, the packet loss rate $\lambda_{i,s,0}$ between a cluster head $H_i$ and the sink over $C_0$, the energy consumption for inter-cluster data transmission over $C_0$ is 
$E_{2,0} = \sum_{H_i \in \mathcal{L}} \dfrac{A_i \cdot ER_{2,i}}{(1-\lambda_{i,s,0})}$,
where $ER_{2,i} = \dfrac{\eta \cdot R_{i,s,0} \cdot e_c + P_{i,0} + \alpha_{c,j}}{\eta \cdot R_{i,s,0}}$ means the energy consumption rate of $H_i$ for transmitting inter-cluster data over $C_0$, $A_i = \sum_{s_j \in \mathcal{N}_i}A_j \cdot \psi_i$, $R_{i,s,0} = B_0 \log {\left ( 1 + h_{i,s,0}^2 P_{i,0}/\sigma_0^2\right )}$ and $P_{i,0}$ is the transmission power of $H_i$.
\end{Prop}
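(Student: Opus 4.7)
The plan is to mirror the proof of \textbf{Proposition~\ref{prop1}} almost step for step, with the cluster head $H_i$ playing the role of a cluster member $s_j$ and the sink playing the role of the cluster head, the only new ingredient being the aggregation factor $\psi_i$ that relates the payload a CH must forward to the total raw data collected in its cluster.

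First I would specify the per-CH traffic load. Because $H_i$ aggregates $\sum_{s_j \in \mathcal{N}_i} A_j$ bits with aggregation rate $\psi_i$, the volume it must push to the sink within a transmission period is $A_i = \psi_i \sum_{s_j \in \mathcal{N}_i} A_j$, which is the expression given in the statement. Next, I would use the ARQ-style retransmission argument already invoked for the intra-cluster case: since the link $H_i \to$ sink over $C_0$ has packet loss rate $\lambda_{i,s,0}$, each packet needs an expected $1/(1-\lambda_{i,s,0})$ attempts, so the expected amount of data that physically traverses the channel is $A_i/(1-\lambda_{i,s,0})$. With the fixed license-free-channel power $P_{i,0}$ and the Shannon rate $R_{i,s,0} = B_0 \log(1 + h_{i,s,0}^2 P_{i,0}/\sigma_0^2)$ plugged into Eq.~(\ref{eq.transrate}), the occupied transmission time of $H_i$ is $t_{i,0} = A_i/[(1-\lambda_{i,s,0}) R_{i,s,0}]$.

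Then I would invoke the energy model~(\ref{energymodel}) to write the transmit energy of $H_i$ as $(P_{i,0}+\alpha_{c,j})t_{i,0}/\eta$, and add the reception energy at the sink, $e_c \cdot A_i/(1-\lambda_{i,s,0})$, exactly as in Eqs.~(\ref{eq.ect})--(\ref{eq.ecr}). Summing the two contributions for $H_i$ and factoring out $A_i/(1-\lambda_{i,s,0})$ produces the per-head term
\begin{equation*}
\frac{A_i}{1-\lambda_{i,s,0}}\cdot\frac{\eta R_{i,s,0} e_c + P_{i,0} + \alpha_{c,j}}{\eta R_{i,s,0}} = \frac{A_i \cdot ER_{2,i}}{1-\lambda_{i,s,0}}.
\end{equation*}
Summing over all $H_i \in \mathcal{L}$ yields the claimed formula for $E_{2,0}$.

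There is no real obstacle here: the statement is structurally identical to \textbf{Proposition~\ref{prop1}}, with the only deviations being (i) that the source nodes are now CHs carrying post-aggregation payloads $A_i = \psi_i \sum_j A_j$ rather than raw sensed payloads $A_j$, and (ii) that the circuit-power and power-amplifier-efficiency constants $\alpha_{c,j}$ and $\eta$ are assumed to carry over unchanged to the CHs, so that the same energy model~(\ref{energymodel}) applies verbatim. The mildest subtlety worth flagging in the write-up is that $P_{i,0}$ is explicitly fixed on $C_0$ (per the system model's remark about avoiding interference to coexisting license-free applications), which is what allows $R_{i,s,0}$ and $ER_{2,i}$ to be treated as constants during the computation instead of optimization variables; adjustable powers will only enter in the subsequent analysis of licensed-channel use.
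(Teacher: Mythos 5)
Your proposal is correct and follows exactly the route the paper takes: the paper's own proof of Proposition~\ref{prop3} is simply stated as ``similar to the proof of Proposition~\ref{prop1},'' and your write-up is precisely that adaptation, with the aggregated load $A_i = \psi_i\sum_{s_j\in\mathcal{N}_i}A_j$, the retransmission factor $1/(1-\lambda_{i,s,0})$, and the fixed power $P_{i,0}$ over $C_0$ all handled as the paper intends.
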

\begin{proof}
Similar to the proof of \textbf{Proposition~\ref{prop1}}. 
\end{proof} 

We then determine the minimized energy consumption of inter-cluster data transmission by accessing licensed channel $C_x$. Based on Eq.~(\ref{eq.availabletime}), we can calculate the CAD of $C_x$ as $T_x$. Note that, besides $T_x$, the transmission power of CHs can also be adjusted for the inter-cluster data transmission. For each $H_i$, let $P_{i,x}$ and $t_{i,x}$ denote the allocated transmission power and transmission time over $C_x$, respectively. The energy consumption of data transmission over $C_x$ is $e_{i,x} = \dfrac{1}{\eta}(P_{i,x} + \alpha_{c,i})\cdot t_{i,x}$, and the energy consumption of transmitting the residual data over $C_0$, if any, is $e_{0,x} = \Big [ A_i -  B_x \log\left(1+\dfrac{h_{i,s,x}^2 P_{i,x}}{\sigma_x^2} \right) t_{i,x}\Big ]\cdot ER_{2,i} \cdot \dfrac{1}{1-\lambda_{i,s,0}}$. To minimize the energy consumption, we can jointly determine the transmission power vector $\bm{P_x} = \{P_{1,x}, ..., P_{m,x}\}$ and transmission time vector $\bm{t_x}=\{t_{1,x}, ..., t_{m,x}\}$ of the CHs, which can be formulated as the following optimization problem: 
\begin{align*}
&\textbf{(PTAP)~} \text{minimize~} E_{2,x} = \sum_{H_i \in \mathcal{L}} \left ( e_{i,x} + e_{i,0} \right )  \\\nonumber
{\rm s.t.}~ &\begin{cases}
  B_x \cdot \log\left(1+\dfrac{h_{i,s,x}^2 P_{i,s}}{\sigma_x^2} \right) \cdot t_{i,x} \le A_i,~~~\forall H_i \in \mathcal{L},\\
  \sum_{H_i \in \mathcal{L}} t_{i,x} \le  T_x,\\
  t_{i,x} \ge 0,~~~\forall H_i \in \mathcal{L},\\
  0 \le P_i \le P_{max},~~~\forall H_i \in \mathcal{L},
\end{cases}
\end{align*}
where $P_{max}$ is the maximum power of CHs.

Since $P_{i,x}$ and $t_{i,x}$ are two continuous decision variable for each $H_i \in \mathcal{L}$, \textbf{(PTAP)} can be proved as a biconvex optimization problem. The analysis for the solution of \textbf{(PTAP)} will be discussed in the following subsection. Let $E_{2,x}^*$ denote the optimal energy consumption, and $\bm{P_x}^* = \{P_{1,x}^*, ..., P_{m,x}^*\}$ and $\bm{t_x}^*=\{t_{1,x}^*, ..., t_{m,x}^*\}$  denote the optimal allocated transmission power and time, respectively. Then, we can calculate the expected energy consumption by accessing $C_x$ and determine the expected accessible channel for inter-cluster data transmission with the following proposition.
\begin{Prop}
\label{prop4}
For channel $C_x$, given the detection probability $P_d^x$ and the false alarm probability $P_f^x$, the expected energy consumption of inter-cluster data transmission by accessing $C_x$ is
\begin{equation}
\label{eq.expectedenergyconsumptionforinter}
\begin{aligned}
\overline{E_{2,x}} &= E_{2,0} + Y_{i,s,x} F_s^x t_{i,x}^* + 2 m e_w F_s^x + |\bm{y}| e_s,
\end{aligned}
\end{equation}
and $C_x$ is an expected accessible channel for inter-cluster data transmission, if we have
\begin{equation}
\label{eq.accessiblechannelforinter}
\begin{aligned}
Y_{i,s,x} \cdot F_s^x \cdot t_{i,x}^* + 2\cdot m \cdot e_w \cdot F_s^x + |\bm{y}| \cdot e_s < 0,
\end{aligned}
\end{equation}
where $F_s^x = P_{off}^x \cdot (1 - P_f^x)$ and 
\begin{align*}
Y_{i,s,x} = \sum_{s_j \in \mathcal{N}_i} \Big ( \dfrac{(P_{i,x}^*+\alpha_{c,i})(1-\lambda_{i,s,0})-\eta \cdot ER_{2,j} R_{i,s,x}}{(1-\lambda_{i,s,0})\cdot \eta} \Big ).
\end{align*}
\end{Prop}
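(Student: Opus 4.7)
The plan is to mirror the argument of Proposition~\ref{prop2}, with each quantity replaced by its inter-cluster counterpart, so the structure of the proof is essentially the same two-branch expectation over the outcome of cooperative sensing.

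First, I would invoke the solution of \textbf{(PTAP)} (whose biconvex structure is handled in the next subsection) to obtain the optimal allocation $(\bm{P_x}^*, \bm{t_x}^*)$ together with the corresponding minimum energy consumption
$$E_{2,x}^* = \sum_{H_i \in \mathcal{L}} \Big [\dfrac{(P_{i,x}^* + \alpha_{c,i}) t_{i,x}^*}{\eta} + \dfrac{(A_i - R_{i,s,x} t_{i,x}^*) \cdot ER_{2,i}}{1 - \lambda_{i,s,0}}\Big ],$$
with $R_{i,s,x} = B_x \log(1 + h_{i,s,x}^2 P_{i,x}^*/\sigma_x^2)$. This plays the role of Eq.~(\ref{eq.optimalx}) in the inter-cluster setting.

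Next, I would account for the cognitive-radio overhead. Cooperative sensing is always paid for, costing $|\bm{y}| e_s$, whereas channel switching is incurred only if the CHs actually move to $C_x$: the $m$ CHs each switch once to $C_x$ and once back, giving $2m e_w$. Notice the pre-factor changes from $|\mathcal{N}_i|$ in Proposition~\ref{prop2} to $m$ here because switching is now performed by CHs rather than CMs. Conditioning on the outcome of the cooperative sensing then yields, in analogy with Eq.~(\ref{eq.expectedec}),
$$\overline{E_{2,x}} = F_s^x\bigl(E_{2,x}^* + |\bm{y}| e_s + 2m e_w\bigr) + (1 - F_s^x)\bigl(E_{2,0} + |\bm{y}| e_s\bigr).$$

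The remainder is purely algebraic: substitute $E_{2,0}$ from Proposition~\ref{prop3} and the expression for $E_{2,x}^*$ above, cancel the common $|\bm{y}| e_s$ term, and group the contributions proportional to $F_s^x t_{i,x}^*$ together with the corresponding residual-data terms coming from $E_{2,0}$. The main obstacle is verifying that the per-CH coefficient
$$\dfrac{P_{i,x}^* + \alpha_{c,i}}{\eta} - \dfrac{R_{i,s,x} \cdot ER_{2,i}}{1 - \lambda_{i,s,0}}$$
collapses to the summand of $Y_{i,s,x}$ once the common denominator $\eta(1-\lambda_{i,s,0})$ is cleared; this step is mechanical but requires careful bookkeeping because both $P_{i,x}^*$ and $R_{i,s,x}$ depend on the optimizer of \textbf{(PTAP)}. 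Finally, the accessibility criterion follows by imposing $\overline{E_{2,x}} < E_{2,0}$, which cancels $E_{2,0}$ on both sides and leaves precisely inequality~(\ref{eq.accessiblechannelforinter}), completing the argument.
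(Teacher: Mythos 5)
Your proposal is correct and follows exactly the route the paper intends: the paper omits this proof explicitly because it is "similar to the proof of Proposition 2," and your two-branch conditioning on the cooperative-sensing outcome, with $2me_w$ replacing $2|\mathcal{N}_i|e_w$ and \textbf{(PTAP)} replacing \textbf{(TAP)}, is precisely that adaptation. Your final per-CH coefficient summed over $H_i \in \mathcal{L}$ is in fact the correct reading of $Y_{i,s,x}$ (the paper's index set $s_j \in \mathcal{N}_i$ and subscript $ER_{2,j}$ there appear to be typographical slips), so nothing further is needed.
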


Based on \textbf{Proposition~\ref{prop4}}, the following corollary provides the condition when CHs should sense licensed channels for inter-cluster data transmission.
\begin{Cor}
\label{cor2}
If there exists such channel $C_x \in \mathcal{C}$ that can be an expected accessible channel for inter-cluster data transmission, CHs should sense licensed channels to transmit inter-cluster data to the sink. 
\end{Cor}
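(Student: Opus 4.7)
The plan is to mirror the argument used to prove \textbf{Corollary~\ref{cor1}}, now substituting the inter-cluster objects for their intra-cluster counterparts. First I would invoke the inter-cluster analog of \textbf{Definition~\ref{def1}}: a channel $C_x$ is expected accessible for inter-cluster transmission precisely when the act of sensing it, and upon a successful idle detection switching to it, yields a strictly smaller expected total energy than remaining on $C_0$. This reading is already what the statement of \textbf{Proposition~\ref{prop4}} implicitly encodes, so no new definitional work is required.

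Next I would apply \textbf{Proposition~\ref{prop4}} directly. Eq.~(\ref{eq.expectedenergyconsumptionforinter}) expresses the expected energy under the sense-and-switch decision as $\overline{E_{2,x}} = E_{2,0} + Y_{i,s,x} F_s^x t_{i,x}^* + 2 m e_w F_s^x + |\bm{y}| e_s$, so rearranging $\overline{E_{2,x}} < E_{2,0}$ reproduces exactly Eq.~(\ref{eq.accessiblechannelforinter}). Hence the existence of a $C_x \in \mathcal{C}$ satisfying Eq.~(\ref{eq.accessiblechannelforinter}) is equivalent to the existence of a channel whose sensing strictly reduces the expected inter-cluster energy consumption.

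The conclusion then follows by a one-line argument: since any such expected accessible channel, if it exists, delivers a strict expected improvement over not sensing, an energy-minimizing set of CHs should trigger the sensing operation. I anticipate no real obstacle here; the corollary is essentially a decision-theoretic restatement of \textbf{Proposition~\ref{prop4}}, and all the hard work (deriving the expected energy expression, balancing the deterministic sensing cost $|\bm{y}| e_s$ against the switching cost $2 m e_w$ paid with probability $F_s^x$, and absorbing the joint power-time optimization into $E_{2,x}^*$) has already been done inside \textbf{Proposition~\ref{prop4}}. The only subtle point worth emphasizing in writing is that the expectation in Eq.~(\ref{eq.expectedenergyconsumptionforinter}) already accounts for the branch in which $C_x$ is sensed as busy and the CHs simply stay on $C_0$, so the stated inequality genuinely captures the net expected benefit of the full sense-and-switch protocol rather than the benefit of switching conditional on a successful idle detection.
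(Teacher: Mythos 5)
Your proposal is correct and follows exactly the route the paper intends: the paper omits the proof of \textbf{Corollary~\ref{cor2}}, stating it is analogous to that of \textbf{Corollary~\ref{cor1}}, which is precisely the argument you reconstruct (rearranging $\overline{E_{2,x}} < E_{2,0}$ via \textbf{Proposition~\ref{prop4}} to obtain Eq.~(\ref{eq.accessiblechannelforinter}) and concluding that sensing is expected-energy-reducing). Your closing remark that the expectation already accounts for the busy-detection branch is a worthwhile clarification but does not change the approach.
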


The proof to \textbf{Proposition~\ref{prop4}} and \textbf{Corollary~\ref{cor2}} are omitted, since they are similar to the proof of \textbf{Proposition~\ref{prop2}} and \textbf{Corollary~\ref{cor1}}.

\subsection{Joint Transmission Power and Time Allocation for Inter-cluster Data Transmission}
In this subsection, we aim to solve the joint transmission power and time allocation problem (i.e., \textbf{(PTAP)}) for minimizing the energy consumption of inter-cluster data transmission.

We first expand the objective function of \textbf{(PTAP)} as 
\begin{align}
\label{eq.ecforinter}
\nonumber
E_{2,x} &= \sum_{H_i \in \mathcal{L}} \dfrac{A_i \cdot ER_{2,i}}{1-\lambda_{i,s,0}} + \sum_{H_i \in \mathcal{L}} \dfrac{(P_{i,x}+\alpha_{c,i}) \cdot t_{i,x}}{\eta}\\
&~~~-\sum_{H_i \in \mathcal{L}} \dfrac{ B_x\log\left(1+h_{i,s,x}^2 P_{i,x}/\sigma_x^2 \right) t_{i,x} ER_{2,i}}{1-\lambda_{i,s,0}}.
\end{align}
Since $\sum\limits_{H_i \in \mathcal{L}} \dfrac{A_i \cdot ER_{2,i}}{1-\lambda_{i,s,0}} $ is independent with the decision variables, \textbf{(PTAP)} is equivalent to minimizing the residual two parts of $E_{2,x}$. Let $W_i \eqdef \dfrac{B_x \cdot ER_{2,i}}{1-\lambda_{i,s,0}}$ and $E_{2,x}' \eqdef \sum\limits_{H_i \in \mathcal{L}} \dfrac{(P_{i,x}+\alpha_{c,i}) \cdot t_{i,x}}{\eta} - \sum\limits_{H_i \in \mathcal{L}} \left ( W_i \log\left(1+ \dfrac{h_{i,s,x}^2 P_{i,x}}{\sigma_x^2} \right) t_{i,x} \right)$, the equivalent problem of \textbf{(PTAP)} can be given as follows,
\begin{align*}
&\textbf{(PTAP-E)~} \text{~minimize~~} E_{2,x}' \\\nonumber
{\rm s.t.}~ & {\rm ~the~same~constraints~as~\textbf{(PTAP)}}.
\end{align*}

In the following, we focus on solving \textbf{(PTAP-E)} instead of \textbf{(PTAP)}. The main idea of solving the biconvex problem is to decouple the joint optimization problem into two sequential sub-problems. It can be achieved by first determining the optimal transmission power for a given transmission time $\bm{t_x}$ from the feasible set of transmission time. Then, using the determined optimal $\bm{P_x}$ to derive the optimal $\bm{t_x}$, which can be iteratively used to determine the optimal transmission power. With sufficient iteration, we can obtain the optimal energy consumption. The detailed proof of this decoupling approach is provided in~\cite{de1994block,shu2006joint}. Taking advantage of this property, the solution of \textbf{(PTAP-E)} can be determined as follows.

\subsubsection{\textbf{Sub-Problem 1 - Optimization of Transmission Power $\bm{P_{x}^*}$ Under Given $\bm{t_{x}}$ }} We first calculate the optimal power allocation vector $\bm{P_x}^*$, when the allocated transmission time vector $\bm{t}$ is fixed with $t_{i,x} \ge 0$ and $\sum_{H_i \in \mathcal{L}} t_{i,x} \le T_x$. \textbf{(PTAP-E)} is equivalent to 
\begin{align*}
&\textbf{(PTAP-E1)~} \text{minimize~} E_{2,x}'(\bm{P_x}) \\\nonumber
{\rm s.t.}~ &\begin{cases}
  B_x \cdot \log\left(1+\dfrac{h_{i,s,x}^2 P_{i,x}}{\sigma_x^2} \right) \cdot t_{i,x} \le A_i,~~~\forall H_i \in \mathcal{L},\\
  0 \le P_{i,x} \le P_{max},~~~\forall H_i \in \mathcal{L}.
\end{cases}
\end{align*}
Obviously, \textbf{(PTAP-E1)} is a convex optimization problem, due to the convex objective function and convex feasible sets. Note that, the first constraint of \textbf{(PTAP-E1)} can be rewritten as a linear constraint $P_{i,x} \le \left(2^{\frac{A_i}{B_xt_{i,x}}} - 1\right)\cdot \sigma_x^2 / h_{i,s,x}^2$, because both $B_x$ and $t_{i,x}$ are no less than 0 and the logarithm function is monotonously increasing over the feasible set. Therefore, we have the following proposition.

\begin{Prop}
\label{prop5}
If the optimal solution to \textbf{(PTAP-E1)} exists, i.e., the feasible set of \textbf{(PTAP-E1)} is not empty, the optimal power allocation $\bm{P_x^*}$ is
\begin{align}
\label{eq.optimaltransmissionpower}
P_{i,x}^* = 
&\begin{cases}
  0,~~~\text{if~} P_{i,x}|_{\frac{\partial f}{\partial P_{i,x}} = 0} \le 0;\\
  P_{i,x}|_{\frac{\partial f}{\partial P_{i,x}} = 0},~~~\text{if~} 0 < P_{i,x}|_{\frac{\partial f}{\partial P_{i,x}} = 0} \le P_{i,x}^B;\\
  P_{i,x}^B,~~~\text{otherwise}.
\end{cases},
\end{align}
where $P_{i,x}^B = \min\Big\{\left(2^{\frac{A_i}{B_xt_{i,x}}} - 1\right)\cdot \sigma_x^2 / h_{i,s,x}^2, P_{max}\Big\}$ and $P_{i,x}|_{\frac{\partial f}{\partial P_{i,x}} = 0} = \dfrac{W_i \cdot \eta}{\ln2} - \dfrac{\sigma_x^2}{h_{i,s,x}^2}$.
\end{Prop}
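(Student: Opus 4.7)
The plan is to exploit the separability of the problem over the cluster heads and reduce \textbf{(PTAP-E1)} to $m$ independent univariate convex minimizations, each of which can be solved in closed form by first-order analysis.

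First I would observe that the objective $E_{2,x}'(\bm{P_x}) = \sum_{H_i \in \mathcal{L}} \bigl[(P_{i,x}+\alpha_{c,i})t_{i,x}/\eta - W_i \log(1+h_{i,s,x}^2 P_{i,x}/\sigma_x^2) t_{i,x}\bigr]$ is fully separable in the variables $P_{i,x}$, and the constraints also decouple (the two inequality constraints each involve only a single $P_{i,x}$). Hence the problem splits into $m$ independent scalar problems, one per cluster head $H_i$, with objective $f_i(P) = (P+\alpha_{c,i})t_{i,x}/\eta - W_i \log(1+h_{i,s,x}^2 P/\sigma_x^2)\, t_{i,x}$. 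Next, because $B_x>0$, $t_{i,x}\ge 0$, and the logarithm is strictly increasing, the rate constraint $B_x \log(1+h_{i,s,x}^2 P/\sigma_x^2) t_{i,x} \le A_i$ is equivalent to the linear bound $P \le (2^{A_i/(B_x t_{i,x})}-1)\sigma_x^2 / h_{i,s,x}^2$. Combining with $0\le P\le P_{max}$, the per-variable feasible set is the closed interval $[0,\,P_{i,x}^B]$ with $P_{i,x}^B$ as defined in the proposition.

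Then I would verify convexity of each $f_i$ on this interval by computing
\begin{equation*}
\frac{\partial^2 f_i}{\partial P^2} = \frac{W_i t_{i,x} (h_{i,s,x}^2/\sigma_x^2)^2}{\ln 2 \cdot (1 + h_{i,s,x}^2 P/\sigma_x^2)^2} > 0,
\end{equation*}
so $f_i$ is strictly convex. Setting $\partial f_i/\partial P = t_{i,x}/\eta - W_i t_{i,x} h_{i,s,x}^2 / [\ln 2\,(\sigma_x^2 + h_{i,s,x}^2 P)] = 0$ gives the unique unconstrained stationary point $P_{i,x}|_{\partial f/\partial P=0} = W_i\eta/\ln 2 - \sigma_x^2/h_{i,s,x}^2$, matching the formula in the statement.

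Finally, I would invoke the standard result that a strictly convex univariate function on a closed interval attains its minimum either at its unconstrained stationary point (if that point lies in the interval) or, otherwise, at whichever endpoint is closer to it: the derivative is negative for $P$ below the stationary point and positive above. Projecting $P_{i,x}|_{\partial f/\partial P=0}$ onto $[0, P_{i,x}^B]$ therefore yields exactly the three-case expression for $P_{i,x}^*$ stated in Eq.~(\ref{eq.optimaltransmissionpower}). The hypothesis that the feasible set is nonempty guarantees $P_{i,x}^B \ge 0$ for each $i$, so the projection is well-defined. There is no essential obstacle: the main thing to be careful about is showing that the rate constraint can be rewritten as the linear bound defining $P_{i,x}^B$ (so the feasible region is a genuine closed interval) and confirming the sign of the second derivative so that the projection argument applies; everything else is routine first-order calculus.
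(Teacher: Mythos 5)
Your proposal is correct and follows essentially the same route as the paper's proof: reduce to a per-$H_i$ scalar convex problem, rewrite the rate constraint as the linear bound defining $P_{i,x}^B$, find the stationary point from $\partial f/\partial P_{i,x}=0$, and clamp it to the interval $[0,P_{i,x}^B]$. You are somewhat more explicit than the paper in verifying separability and the positive second derivative (the paper argues via the monotonically increasing first derivative, and even contains a slip calling the stationary point a ``maximum'' where it clearly means minimum), but the substance is identical.
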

\begin{proof}
Due to the convexity of \textbf{(PTAP-E1)}, the locally optimal solution is the globally optimal solution. Let $f(P_{i,x}) \eqdef \dfrac{(P_{i,x}+\alpha_{c,i}) \cdot t_{i,x}}{\eta} - W_i \log\left(1+ \dfrac{h_{i,s,x}^2 P_{i,x}}{\sigma_x^2} \right) t_{i,x}$. Its first-order partial derivate is
\begin{align}
\frac{\partial f}{\partial P_{i,x}} = \dfrac{t_{i,x}}{\eta} - \dfrac{W_i \cdot t_{i,x} \cdot h_{i,s,x}^2}{\ln2\left( \sigma_x^2 + h_{i,s,x}^2 \cdot P_{i,x}\right)}.
\end{align}

Let $\frac{\partial f}{\partial P_{i,x}} = 0$, we have $P_{i,x} = \dfrac{W_i \cdot \eta}{\ln2} - \dfrac{\sigma_x^2}{h_{i,s,x}^2}$. Here, we set $P_{i,x}|_{\frac{\partial f}{\partial P_{i,x}} = 0} \eqdef \dfrac{W_i \cdot \eta}{\ln2} - \dfrac{\sigma_x^2}{h_{i,s,x}^2}$. Since $\frac{\partial f}{\partial P_{i,x}} $ is monotonously increasing over the constraint set of $P_{i,x}$, $f(P_{i,x})$ decreases when $P_{i,x} \le P_{i,x}|_{\frac{\partial f}{\partial P_{i,x}} = 0}$, and the situation reverses when $P_{i,x} \ge P_{i,x}|_{\frac{\partial f}{\partial P_{i,x}} = 0}$. $f(P_{i,x})$ would achieve the maximum value at $P_{i,x}|_{\frac{\partial f}{\partial P_{i,x}} = 0}$.

Meanwhile, according to the constraints of $P_{i,x}$, the feasible set of $P_{i,x}$ is $P_{i,x} \le \left(2^{\frac{A_i}{B_xt_{i,x}}} - 1\right)\cdot \sigma_x^2 / h_{i,s,x}^2$ and $P_{i,x} \le P_{max}$. Let $P_{i,x}^B \eqdef \min\Big\{\left(2^{\frac{A_i}{B_xt_{i,x}}} - 1\right)\cdot \sigma_x^2 / h_{i,s,x}^2, P_{max}\Big\}$. Then, we have $P_{i,x}^* = 0$ if $P_{i,x}|_{\frac{\partial f}{\partial P_{i,x}} = 0} \le 0$; and $P_{i,x}^* = P_{i,x}|_{\frac{\partial f}{\partial P_{i,x}} = 0}$, if $0 < P_{i,x}|_{\frac{\partial f}{\partial P_{i,x}} = 0} \le P_{i,x}^B$; otherwise, $P_{i,x}^* = P_{i,x}^B$. It completes the proof. 
\end{proof}

\subsubsection{\textbf{Sub-Problem 2 - Optimization of Transmission Time $\bm{t_{x}^*}$ Under Given $\bm{P_{x}^*}$}}
Given any feasible transmission time vector $\bm{t_{x}}$, \textbf{Proposition~\ref{prop5}} presents the optimal transmission power vector in terms of $\bm{t_{x}}$ if such an optimal solution exists. In this sub-problem, we aim to determine the optimal $\bm{t_x^*}$ for given $\bm{P_x}$. The sub-problem 2 can be described as follows.
\begin{align*}
&\textbf{(PTAP-E2)~} \text{minimize~} E_{2,x}'(\bm{t_{x}})\\\nonumber
{\rm s.t.}~ &\begin{cases}
  B_x \cdot \log\left(1+\dfrac{h_{i,s,x}^2 P_{i,x}}{\sigma_x^2} \right) \cdot t_{i,x} \le A_i,~~~\forall H_i \in \mathcal{L},\\
  \sum_{H_i \in \mathcal{L}} t_{i,x} \le  T_x,\\
  t_{i,x} \ge 0,~~~\forall H_i \in \mathcal{L}.
\end{cases}
\end{align*}

Since $P_{i,x}$ is a fixed parameter in this sub-problem, \textbf{(PTAP-E2)} becomes a linear programming problem, similar to \textbf{(TAP)} of Section~\ref{sec.optimaltransmissiontime}. The Simplex algorithm can be applied to determine the optimal $\bm{t_{x}^*}$ and energy consumption~\cite{dantzig1998linear}.

Based on the analysis of the two sub-problems with \textbf{(PTAP-E)}, we focus on \textbf{(PTAP)} using the Alternative Convex Search method, which is a special case of the Block-Relaxation Methods~\cite{de1994block}. We illustrate the main idea of addressing this biconvex problem in Fig.~\ref{fig.iterationillustration}, and summarize the detailed procedures of our solution in \textbf{Algorithm~\ref{alg.biconvex}}.
\begin{figure}
\centering
\includegraphics[width=0.45\textwidth]{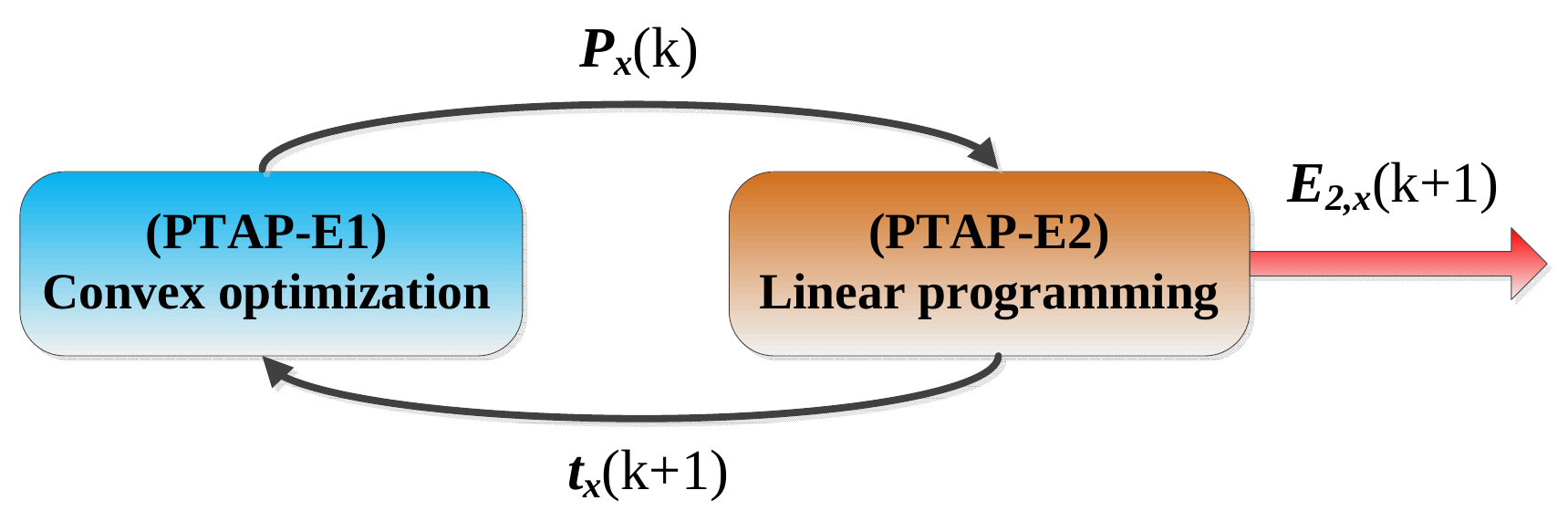}
\caption{Illustration of solving \textbf{(PTAP-E)}.}
\label{fig.iterationillustration}
\end{figure}
\begin{algorithm}[!ht]
    \caption{Alternative Convex Search based Algorithm for Solving \textbf{(PTAP)}.}
    \begin{algorithmic}[1]
        \REQUIRE
        The parameters of \textbf{(PTAP)}, convergence requirement $\omega$, and the maximum iteration number $\Lambda$.
        \ENSURE
        Determining the optimal $\bm{P_x^*}$ and $\bm{t_x^*}$, as well as the optimal energy consumption $E_{2,x}^*$.
            \STATE{Choose an arbitrary start point $\{\bm{P_x}(0), \bm{t_x}(0)\}$ from the feasible set of $\bm{P_x}$ and $\bm{t_x}$, and set $k=0$, $E_{2,x}(0) = 0$ ;}
            \REPEAT
            	\STATE{For given $\bm{t_x}(k)$, determine the optimal $\bm{P_x}(k+1)$ according to Eq.~(\ref{eq.optimaltransmissionpower});}
            	\STATE{For given $\bm{P_x}(k+1)$, determine the optimal $\bm{t_x}(k+1)$ and $E_{2,x}(k+1)$ by solving the linear programming \textbf{(PTAP-E2)};}
            	\STATE{$k = k+1$;}
            \UNTIL{$E_{2,x}(k) - E_{2,x}(k-1) \le \omega$ or $k \ge \Lambda$;}
            \RETURN{$\bm{P_x}(k)$, $\bm{t_x}(k)$, and $E_{2,x}(k) + \sum\limits_{H_i \in \mathcal{L}} \dfrac{A_i \cdot ER_{2,i}}{1-\lambda_{i,s,0}} $;}
    \end{algorithmic}
    \label{alg.biconvex}
\end{algorithm}

\subsection{Joint Power Allocation and Channel Accessing for Inter-cluster Data Transmission}
In this subsection, we propose a joint power allocation and channel accessing scheme for inter-cluster data transmission. Similar to the analysis in Section~\ref{sec4.4}, the channel sensing decision is made according to \textbf{Corollary~\ref{cor2}}. Moreover, the channel sensing and accessing sequence should follow the ordered expected accessible channel set with an increasing order of $\overline{E_{2,x}}$. For the three situations considered in Section~\ref{sec4.4}, they can be addressed during the inter-cluster data transmission with the same logic flow. However, as the transmission power of sensor nodes can be adjusted for different accessed channels, the channel accessing scheme during inter-cluster data transmission is combined with the power allocation scheme. We summarize the main idea of our joint power allocation and channel accessing scheme for inter-cluster data transmission in \textbf{Algorithm~\ref{algm2}}.  
\begin{algorithm}[!ht]
    \caption{Joint Power Allocation and Channel Accessing for Inter-cluster Data Transmission.}
    \begin{algorithmic}[1]
        \REQUIRE
        For each $H_i$, the aggregated data amount of $H_i$ and the packet loss rate $\lambda_{i,s,0}$ over $C_0$, and the parameters in cognitive model and energy consumption model.
        \ENSURE
        Determining the transmission power for cluster heads, the channel sensing and accessing sequence for inter-cluster data transmission.
            \STATE{Calculate the energy consumption $E_2$ over $C_0$ according to \textbf{Proposition~\ref{prop3}};}
            \FORALL{$C_x \in \mathcal{C}$}
                \STATE{Determine $E_{2,x}^*$ and $\overline{E_{2,x}}$ by solving \textbf{(PTAP)} and according to \textbf{Proposition~\ref{prop4}}, respectively;}
            \ENDFOR
            \STATE{Determine the expected accessible channel set $\mathcal{C}'$ according to \textbf{Proposition~\ref{prop4}}, and reorder $\mathcal{C}'$ as $\overline{\mathcal{C}}'$ according to increasing order of $\overline{E_{2,x}}$;}
            \STATE{$k~=~1$;}
            \WHILE{$k~\le~|\overline{\mathcal{C}}'|$}
                \STATE{Sense the $k$-th channel $C_k$ of $\overline{\mathcal{C}}'$;}
                \IF{$C_k$ is idle}
                    \STATE{Go to \textbf{step~17};}
                \ENDIF
                \STATE{$k~=~k~+~1$;}
            \ENDWHILE
            \IF{$~\overline{|\mathcal{C}}'|==0$ \textbf{~or~} $k>|\overline{\mathcal{C}}'|$}
                    \STATE{Transmit the residual inter-cluster data over the default channel $C_0$;}
            \ELSE
                    \STATE{Transmit the inter-cluster data over the channel $C_k$, and allocate the transmission time $t_{i,x}^*$ and adjust the transmission power to $P_{i,x}^*$ for each $H_i \in \mathcal{L}$, according to \textbf{Algorithm~\ref{alg.biconvex}};}
                    \IF{~The CAD of $C_k$ is expired \textbf{~and~} the inter-cluster data transmission is not completed}
                        \STATE{Go to \textbf{step~1};}
                    \ENDIF
            \ENDIF
    \end{algorithmic}
    \label{algm2}
\end{algorithm}

\section{Performance Evaluation}
\label{sec6}
We evaluate the performance of the proposed schemes by extensive simulations on OMNET++~\cite{ju2014jcsu, ju2014ksii}. We setup a network consisting of 200 sensor nodes forming 10 clusters. Sensor nodes are randomly deployed in a circular area with the network radius of 250 m, and the sink is located at the center. There are 15 licensed channels in the primary network, which can be sensed and accessed by the CRSN. All the channels including the default working channel $C_0$ are modeled as Rayleigh fading channels. For each channel $C_x$ ($C_x \in \mathcal{C}\cup \{C_0\}$), the noise spectral density is $10^{-14}$ W/Hz (i.e., $\sigma_x^2 = B_x \cdot 10^{-14}$W), and the channel gain between $s_i$ and $s_j$ is set as $h_{i,j,x}^2 = \gamma \cdot d_{i,j}^{-\mu}$, where $\gamma$ is an exponential random variable with mean value $1$, and $d_{i,j}$ is the distance between $s_i$ and $s_j$, and $\mu = 3$. Instead of setting the parameters of PU traffic on different licensed channels, we directly set the probability that PU is on as $p_{on}^x = 60\%$ and the channel available duration (CAD) as $T_x = \bm{N}$(100, 20) ms for each $C_x$, where $\bm{N}(a, b)$ means the normal distribution with mean value $a$ and variance $b$. The other parameters, if not specified in the simulation figures, are given in Table~\ref{table.parametersettings}.
\begin{table}[!t]
    \caption{Parameter Settings}
    \centering
    \small
    \begin{tabular}{p{5.8cm}|p{2.2cm}}
         \hline
         \hline 
         \textbf{Parameter} & \textbf{Settings} \\
         \hline
         \hline
         CMs' power for intra-cluster data transmission $P_j$ & 20 mW\\
         CHs' power $P_{i,0}$ for inter-cluster data transmission over $C_0$ & 40 mW\\
         CHs' maximum adjustable power when accessing licensed channels $P_{max}$ & 200 mW\\
         Power amplifier efficiency $\eta$~\cite{shu2006joint} & 0.9\\
         Circuit power $\alpha_{c,j}$~\cite{shu2006joint} & 5 mW\\
         Energy consumption for data receiving $e_c$ & 5 nJ/bit\\
         Per-node energy consumption for sensing a licensed channel $e_s$~\cite{sensingcost_2010tvt} & $1.31 \times 10^{-4}$ J\\
         Per-node energy consumption for channel switching $e_w$~\cite{switchingcost_2013tvt} & $10^{-5}$ J\\
         Data amount of sensor node $A_j$ & $\bm{N}(5, 0.5)$ Kb\\
         Aggregation rate at CHs $\phi$ & $70\%$ \\
         Number of cooperative sensing nodes $|\bm{y}|$ & 3 \\
         Bandwidth of license-free channel $C_0$ & 1 MHz \\
         Bandwidth of licensed channel $C_x$ & $\bm{N}(2, 0.5)$ MHz \\
         False alarm probability $F_f^{x}$ & $5\%$ \\
         \hline
    \end{tabular}
    \label{table.parametersettings}
\end{table}

\subsection{Intra-cluster Data Transmission}
We evaluate the performance of the dynamic channel sensing and accessing scheme for intra-cluster data transmission in this subsection. Fig.~\ref{fig.1} shows the energy consumption of intra-cluster data transmission by accessing a specific licensed channel. In our proposed scheme, the CAD of the accessed channel is allocated to CMs according to the optimal solution of \textbf{(TAP)}. We compare our scheme to the average allocation scheme, in which the CAD of the accessed channel is equally allocated to the CMs with residual data. It can be seen that our scheme can achieve lower energy consumption than that of the average allocation scheme, when the CAD of the accessed channel is no large than 60 ms. After the CAD becomes larger than 60 ms, the energy consumption of two schemes converge to the same value. The reason is that a large CAD can guarantee that all the intra-cluster data are transmitted over the licensed channel, which leads to a minimum and stable energy consumption. 
\begin{figure}
\centering
\includegraphics[width=0.43\textwidth]{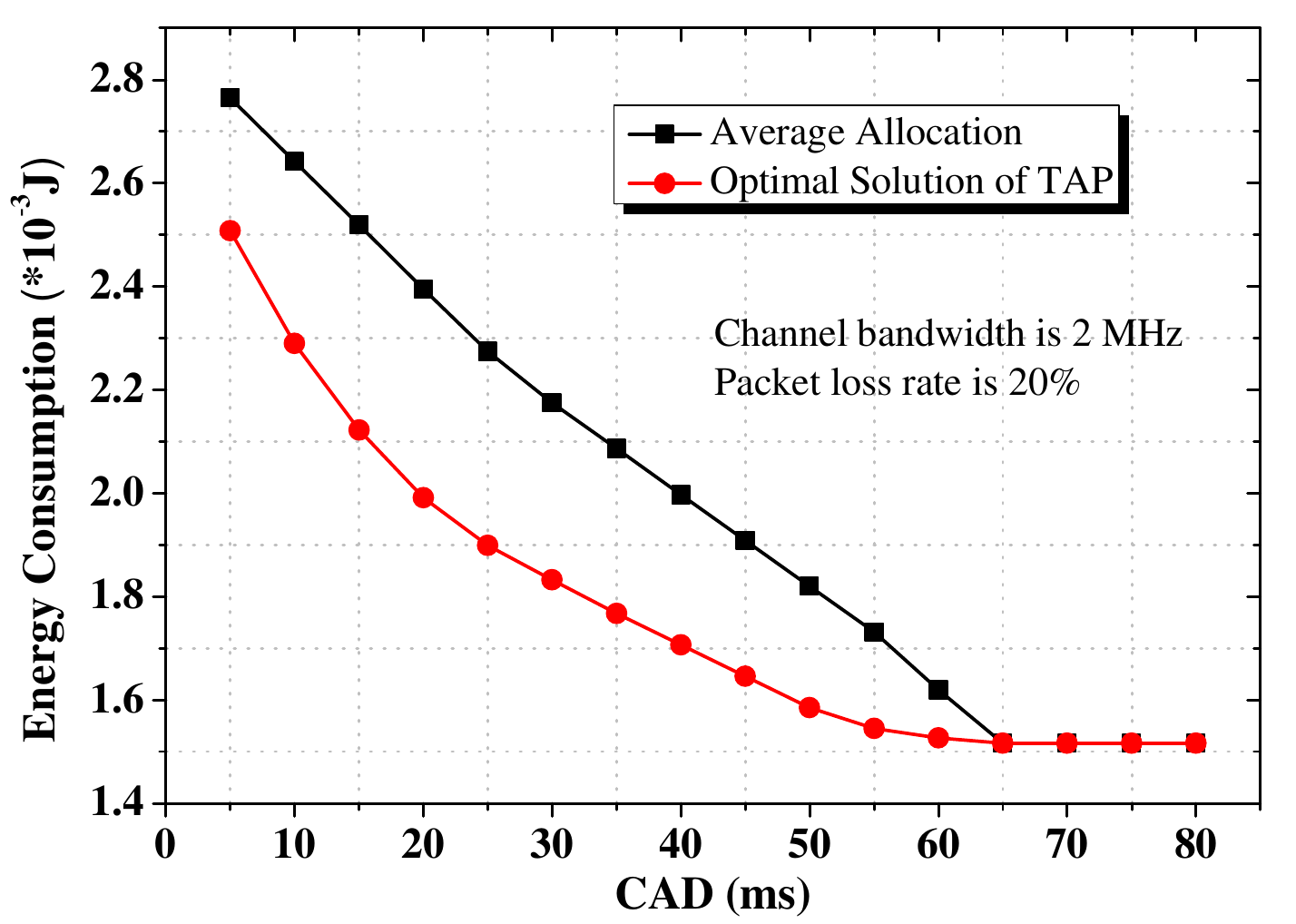}
\caption{Energy consumption comparison for intra-cluster data transmission by accessing a specific licensed channel.}
\label{fig.1}
\end{figure}

Fig.~\ref{fig.2} compares the energy consumption of intra-cluster data transmission under different packet loss rates over $C_0$. In the figure, the proposed scheme means \textbf{Algorithm~\ref{algm1}} and ASA refers to the scheme in which each cluster always senses and accesses a licensed channel for intra-cluster data transmission without considering the channel condition of $C_0$ and the energy consumption in channel sensing and switching. Moreover, different from the proposed scheme, the licensed channels with larger bandwidth are sensed and switched with a higher priority in ASA. It can be seen that energy consumption increases sharply with the increasing packet loss rate of $C_0$, if the cluster only uses $C_0$ for intra-cluster data transmission. The energy consumption of ASA is higher than that of only using $C_0$ when the packet loss rate is lower than $35\%$, and the situation reverses once the packet loss rate exceeds $35\%$. However, our proposed scheme has a much lower energy consumption than the others after the packet loss rate of $C_0$ exceeds $25\%$. Note that, the energy consumption of our proposed scheme is the same as that of only using $C_0$ when the packet loss rate of $C_0$ is smaller than $25\%$, because there is no expected accessible channel for intra-cluster data transmission.
\begin{figure}
\centering
\includegraphics[width=0.43\textwidth]{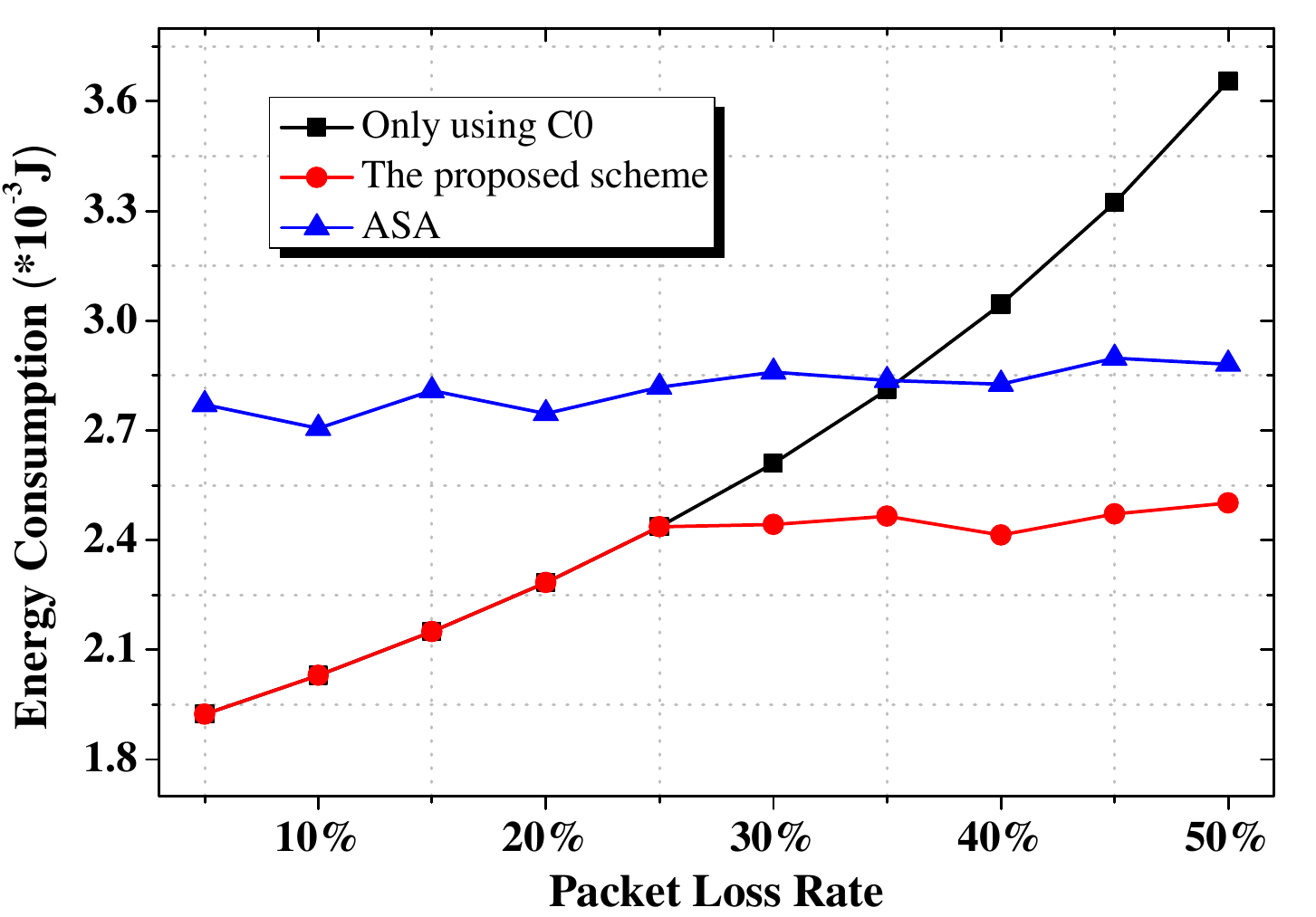}
\caption{Energy consumption comparison for intra-cluster data transmission under different packet loss rates.}
\label{fig.2}
\end{figure}

\subsection{Inter-cluster Data Transmission}
In this subsection, we aim to evaluate the performance of the joint power allocation and channel accessing scheme in inter-cluster data transmission. Fig.~\ref{fig.3} shows the convergence speed of the ACS based algorithm for solving \textbf{(PTAP)}, i.e., \textbf{Algorithm~\ref{alg.biconvex}}. It can be seen that the algorithm can converge (or find the optimal solution) within 6 iterations, which indicates the proposed algorithm is highly efficient and can be applied to resource-limited sensor networks. Fig.~\ref{fig.4} compares the energy consumption for inter-cluster data transmission under the proposed joint transmission power and time allocation scheme and the average allocation scheme. In the average allocation scheme, the CAD of the accessed channel is equally allocated to the CHs with residual data and CHs use the maximum power to transmit their data when using the accessed channel. In our scheme, the transmission time and power are allocated to CHs according to the optimal solution of \textbf{(PTAP)}. From Fig.~\ref{fig.4}, we can see that the average allocation scheme consumes much more energy than our proposed scheme under both scenarios of $P_{max} = 50$mW and $P_{max} = 200$mW. Meanwhile, our proposed scheme has lower energy consumption when it has a larger range of adjustable transmission power.  
\begin{figure}
\centering
\includegraphics[width=0.43\textwidth]{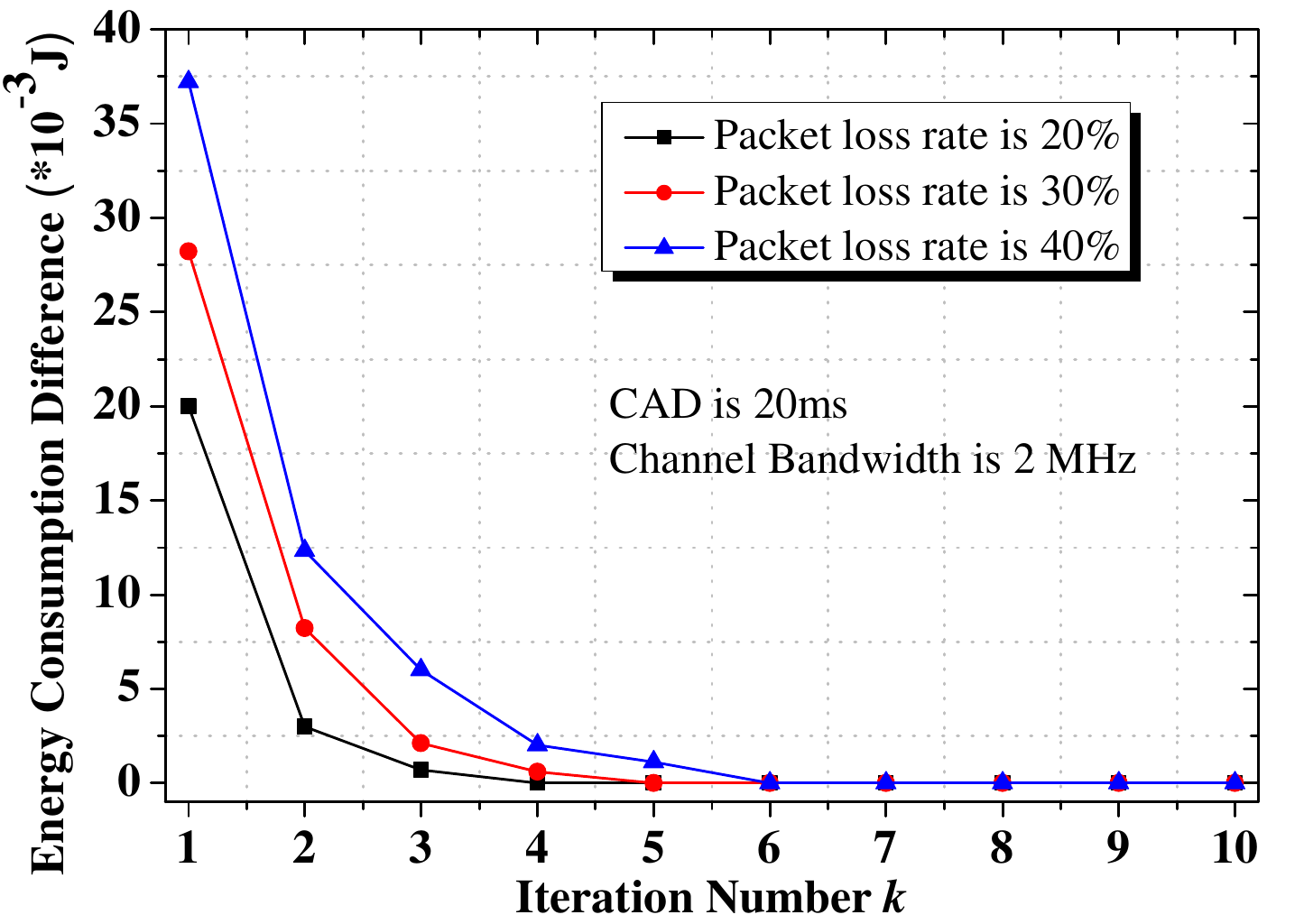}
\caption{Convergence speed of ACS based algorithm for solving \textbf{PTAP}.}
\label{fig.3}
\end{figure}
\begin{figure}
\centering
\includegraphics[width=0.43\textwidth]{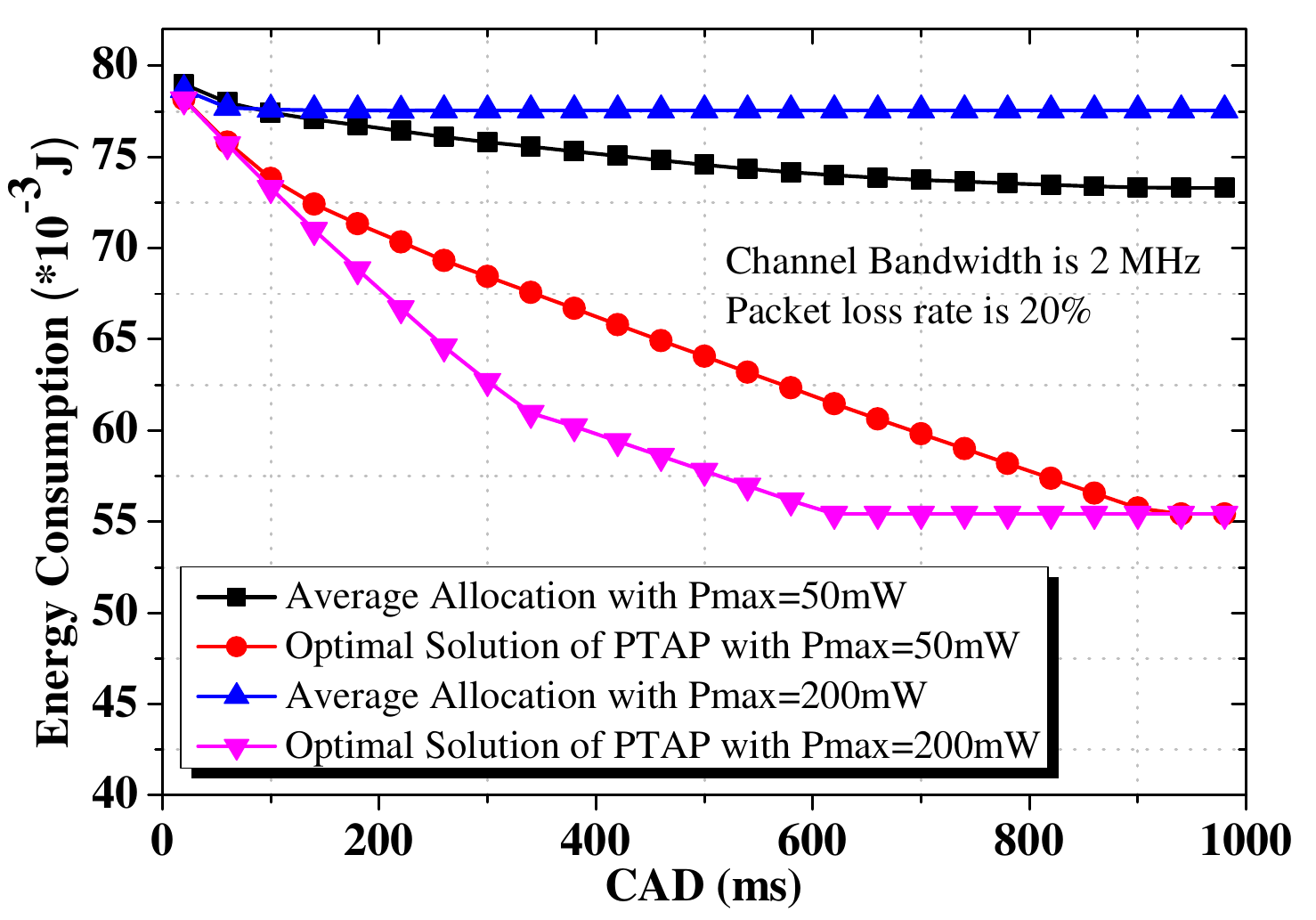}
\caption{Energy consumption comparison for inter-cluster data transmission by accessing a specific licensed channel.}
\label{fig.4}
\end{figure}

Fig.~\ref{fig.5} shows the energy consumption for inter-cluster data transmission with respect to different packet loss rates over $C_0$. Similar to the comparison in Fig.~\ref{fig.2}, our proposed scheme can achieve lower energy consumption than the others after the packet loss rate of $C_0$ is larger than $7\%$, and before that the energy consumption of our proposed scheme is the same as that of only using $C_0$. It indicates that the intra-cluster data transmission will be performed over $C_0$ in our proposed scheme when its packet loss rate is lower than $7\%$. Moreover, compared with the intra-cluster data transmission in Fig.~\ref{fig.2}, licensed channels are sensed and accessed at a lower packet loss rate over $C_0$. Because the heavy data traffic in inter-cluster data transmission can make the channel sensing and accessing profitable, even with a low packet loss rate over $C_0$.
\begin{figure}
\centering
\includegraphics[width=0.43\textwidth]{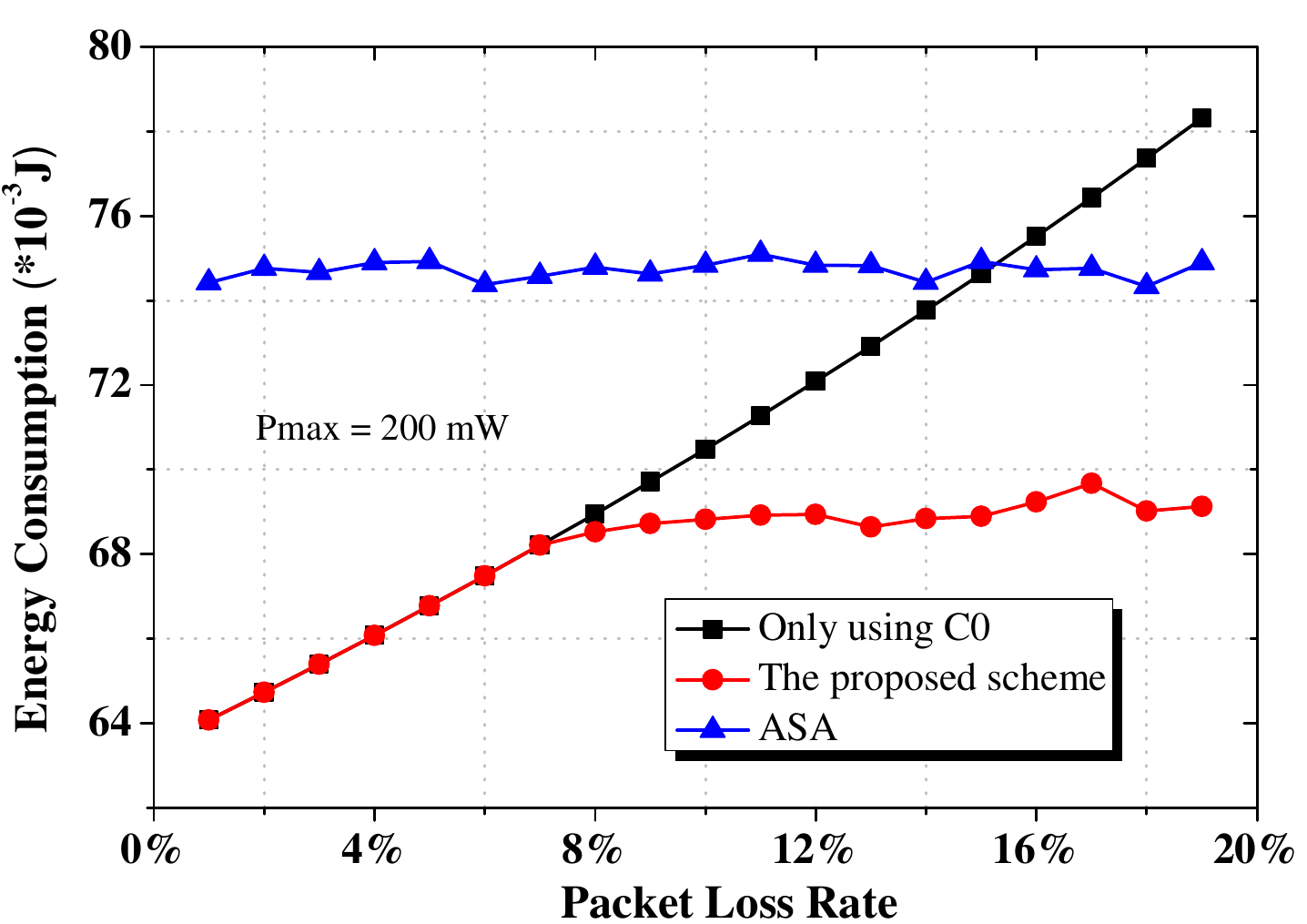}
\caption{Energy consumption comparison in inter-cluster data transmission under different packet loss rates.}
\label{fig.5}
\end{figure}

\subsection{Impacts of System Parameters}
Fig.~\ref{fig.6} shows the total energy consumption comparison under different amount of data traffic. With the increasing data amount transmitted by sensor nodes, the total energy consumption increases sharply if the CRSN uses $C_0$ for data transmission, while it only increases linearly under our proposed schemes. Moreover, higher data traffic indicates better energy consumption improvement. Fig.~\ref{fig.6} shows the total energy consumption comparison under different numbers of licensed channels. It can be seen that the total energy consumption in our proposed schemes decreases with the increasing number of licensed channels. 
\begin{figure}
\centering
\includegraphics[width=0.43\textwidth]{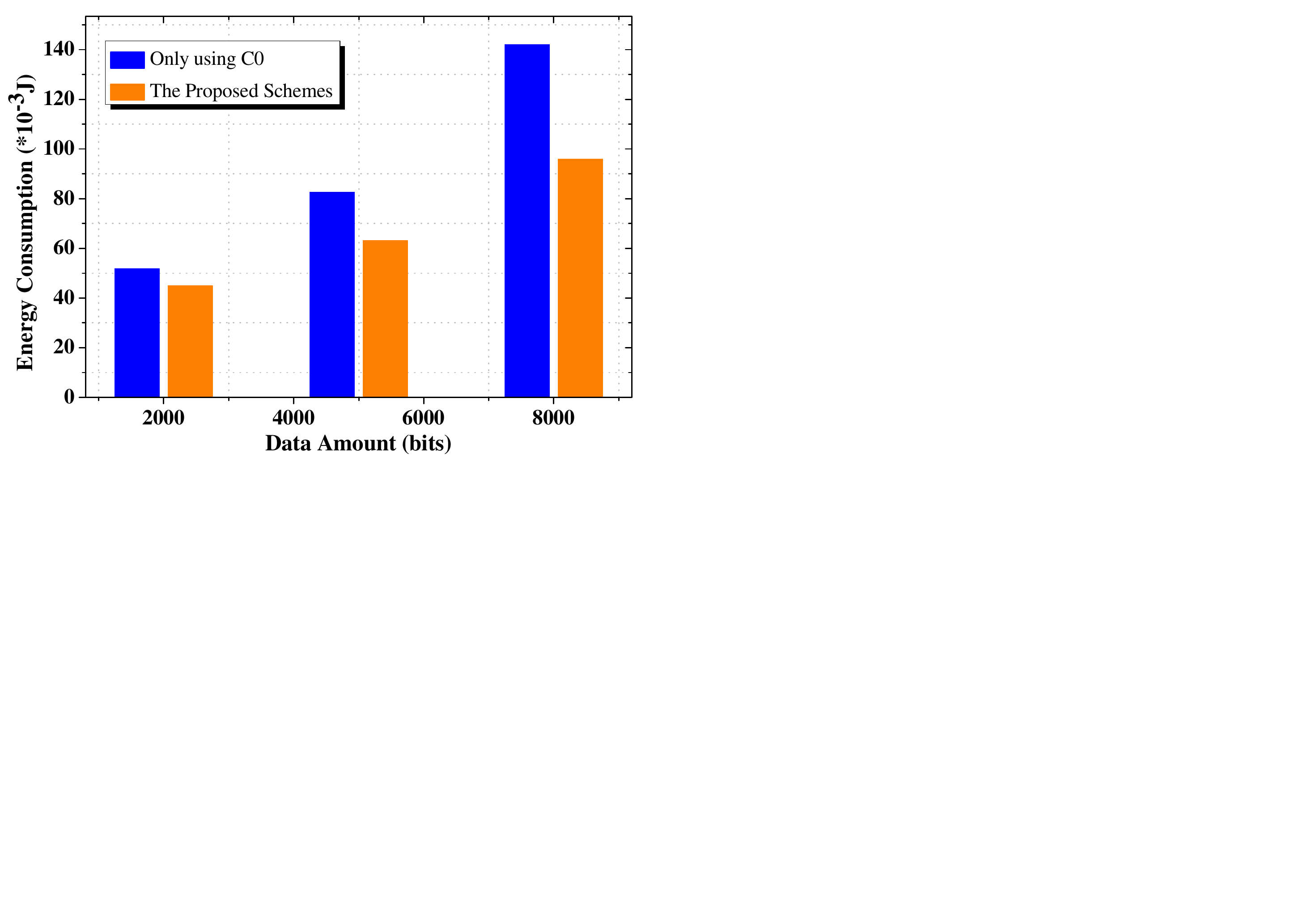}
\caption{Total energy consumption comparison under different amount of data traffic.}
\label{fig.6}
\end{figure}
\begin{figure}
\centering
\includegraphics[width=0.43\textwidth]{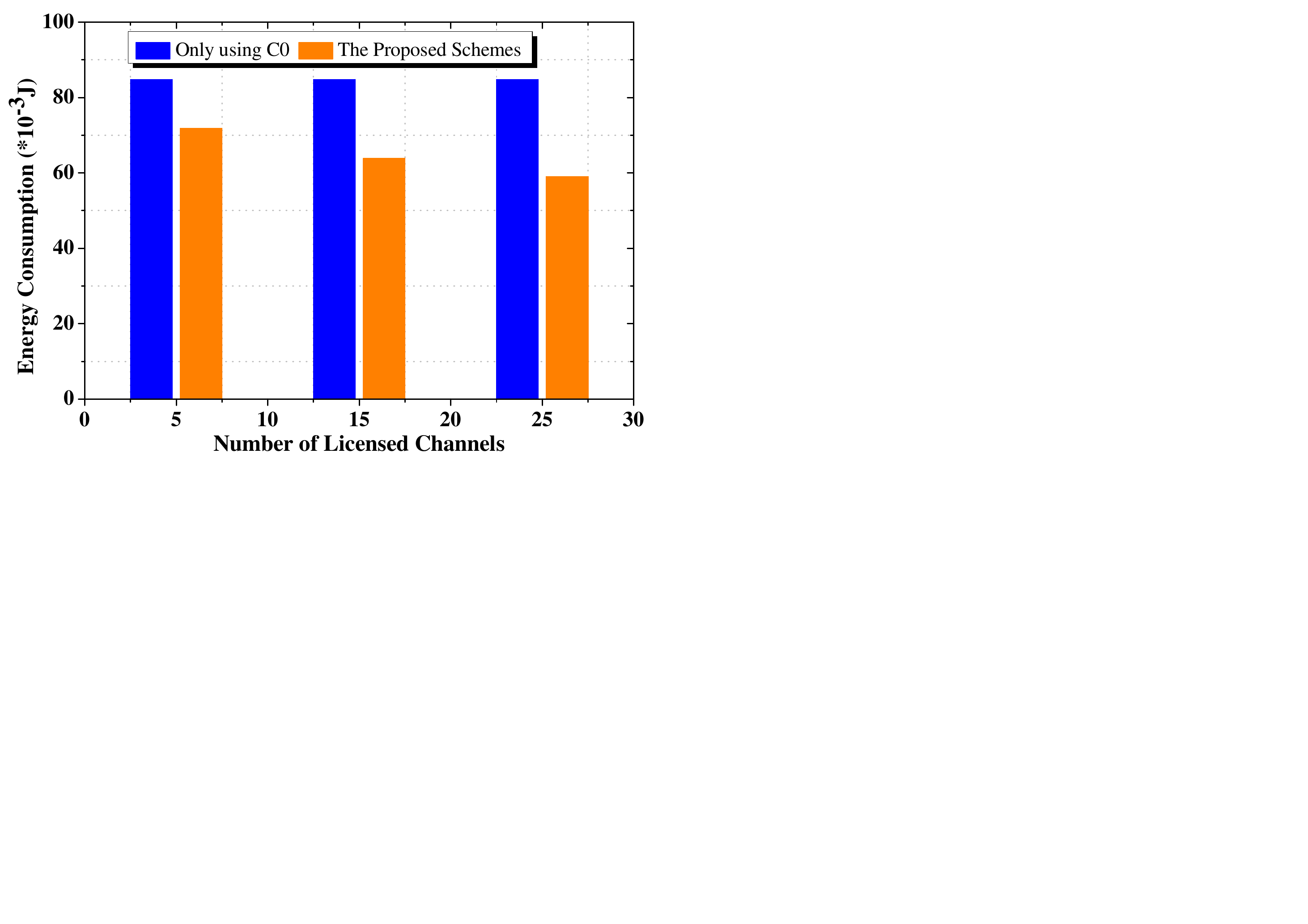}
\caption{Total energy consumption comparison under different numbers of licensed channels.}
\label{fig.7}
\end{figure}

\section{Conclusion}
\label{sec7}
In this paper, we have studied the opportunistic channel accessing problem to reduce the energy consumption in clustered CRSN, which is performed in a two-phase fashion. For intra-cluster data transmission, we have determined the condition to sense a licensed channel and proposed a dynamic channel sensing and accessing scheme for potential energy consumption reduction. For inter-cluster data transmission, we have also determined the channel sensing condition and jointly optimized the transmission power allocation and channel access to minimize the energy consumption. Extensive simulation results demonstrate that the proposed schemes can significantly reduce the energy consumption of data transmission and outperform the dynamic spectrum access schemes without considering the energy consumption of channel sensing and switching. For our future work, we will investigate rechargeable CRSNs, where the harvested energy will be carefully considered to support the cognitive radio techniques due to the stochastic energy arrival.

\ifCLASSOPTIONcaptionsoff
  \newpage
\fi

\bibliographystyle{IEEEtran}
\bibliography{Reference}

\end{document}